\documentclass[submission,copyright,creativecommons]{eptcs}
\providecommand{\event}{AFL 2026} 

\usepackage{iftex}
\usepackage{amsmath}
\usepackage{amsthm}
\usepackage{bm}
\usepackage{lineno}

\ifpdf
\usepackage{underscore}         
\usepackage[T1]{fontenc}        
\else
\usepackage{breakurl}           
\fi

\usepackage{tikz}
\usetikzlibrary{intersections}
\usepackage{xspace}
\usepackage{cleveref}
\usepackage{thmtools,thm-restate}
\usepackage{multirow}
\usepackage{multicol}
\usepackage{tabularx,booktabs}
\newcolumntype{Y}{>{\centering\arraybackslash}X}
\usepackage{tikz-cd}
\usepackage{amsfonts}
\usepackage{caption}
\usepackage{subcaption}
\usepackage{newtxtext}
\usepackage{float}

 \newcommand{\cut}[1]{}

\newtheorem{theorem}{Theorem}
\newtheorem{definition}[theorem]{Definition}
\newtheorem{example}[theorem]{Example}
\newtheorem{remark}[theorem]{Remark}

\newtheorem{lemma}[theorem]{Lemma}
\newtheorem{proposition}[theorem]{Proposition}
\newtheorem{corollary}[theorem]{Corollary}

\def\squareforqed{$\Box$}
\def\QED{\ifmmode\squareforqed\else{\unskip\nobreak\hfil
\penalty50\hskip1em\null\nobreak\hfil\squareforqed
\parfillskip = 0pt\finalhyphendemerits = 0\endgraf}\fi}

\newcommand{\ignore}[1]{ }

\newcommand{\id}{\bm{id}}

\newcommand{\bA}{\bm{A}}
\newcommand{\bB}{\bm{B}}
\newcommand{\bC}{\bm{C}}

\newcommand{\bc}{\bm{c}}

\newcommand{\bsigma}{\bm{\sigma}}

\newcommand{\btau}{\bm{\tau}}
\newcommand{\bupsilon}{\bm{\upsilon}}

\newcommand{\btheta}{\bm{\theta}}
\newcommand{\bpsi}{\bm{\psi}}
\newcommand{\bomega}{\bm{\omega}}

\newcommand{\bchi}{\bm{\chi}}
\newcommand{\bvarphi}{\bm{\varphi}}

\newcommand{\FMA}{\text{\sf FMA}\xspace}

\newcommand{\HRA}{\text{\sf HRA}\xspace}

\newcommand{\tuple}[1]{( #1 )}

\newcommand{\size}[1]{|#1|}

\newcommand{\conf}[2]{[#1,#2]            }
\newcommand{\trans}[1]{ \xrightarrow{#1}  }
\newcommand{\transword}[1]{ \xrightarrow{#1}\mathrel{\vphantom{\to}^*}}
\newcommand{\langof}[1]{L(#1)}

\newcommand{\Reg}{\mathcal{R}}

\title{Pumping Constants for Infinite Alphabets}
\author{Yoav Danieli
\institute{Technion\\
Haifa, Israel}
\email{yoavd@campus.technion.ac.il}}
\def\titlerunning{Pumping Lemmas}
\def\authorrunning{Y. Danieli}
\begin{document}
\maketitle

\begin{abstract}
It is well known that
the pumping lemma for regular languages over finite alphabets
does not extend to infinite alphabets.
Instead, the generalized version of the pumping lemma for infinite alphabets
states that the pumped patterns need not be identical,
but only equivalent up to a finite-order permutation
of the alphabet.

This generalized variant of the pumping lemma involves two parameters:
the length of the pumped pattern and
an upper bound on the order of the permutation.
We improve the known permutation-order
bounds from factorial to tight bounds
governed by Landau's function.
We also show that, if the pumping length
is sufficiently large, then permutations
of order at most two always suffice.

Our main structural result concerns the one-register case:
every one-register automaton satisfies the classical
pumping lemma, with no alphabet permutation.
We prove a quadratic upper bound on the pumping length and give
a matching quadratic lower bound up to constant factors.
The same classical pumping phenomenon is shown
for hierarchical register automata.
Finally, we prove that, in general,
deciding whether a quasi-regular language
is pumpable for given constants is undecidable,
while the minimal classical pumping length
is computable for non-guessing one-register automata.
\end{abstract}

\section{Introduction}

{\em Finite-memory automata}~\cite{KaminskiF94}
are a generalization of
the classical Rabin-Scott finite-state automata~\cite{RabinS59} to
{\em infinite} alphabets.
By equipping the automata with a finite set
of registers (called {\em windows} in~\cite{KaminskiF94}),
which store letters from the infinite alphabet
during the computation, and
by restricting the automaton to
compare input letters with register contents
and to copy input letters to registers only,
the automaton retains a finite memory of previously seen letters.
Consequently, the languages accepted by
finite-memory automata exhibit properties
similar to those of regular languages and are,
thus, termed {\em quasi-regular} languages.

Automata over infinite alphabets have gained increasing importance
in computer science, as they provide formal models 
for systems that operate over unbounded data domains.
Infinite alphabets may be viewed as representing data,
such as
data values in XML documents~\cite{AlonMNSV03},
nonces in cryptographic protocols~\cite{KurtzKW07},
object identities~\cite{GrigoreDPT13},
or parameters of method calls~\cite{HowarJV19}.
The ability to reason about infinite alphabets while maintaining 
decidability of key properties makes these models particularly 
valuable for the formal verification of data-aware systems.

Finite-memory automata provide a structured way to reason
about such systems by focusing on patterns and repetitions
of data values rather than their specific identities.
This perspective is particularly useful in applications
where the exact values of data are less important 
than their relative behavior over time.
In particular,
quasi-regular languages form a subclass of
nominal languages
(also known as sets with atoms or Fraenkel-Mostowski sets)~\cite{Bojanczyk19},
and finite-memory automata themselves are expressively
equivalent to nominal automata
(also known as orbit-finite automata)~\cite{BojanczykKL14}.

A fundamental property of classical finite-state automata
is regularity, 
often characterized by the pumping lemma.
The pumping lemma is a key tool for proving
that certain languages are not regular by identifying 
structural constraints on words within the language. 
However, it is well known that the pumping lemma
does not extend to quasi-regular languages,
see~\cite[Example~3]{KaminskiF94},
where a deterministic automaton with two registers accepts
a language serving as a counterexample to the pumping lemma.
This leaves open, in the existing literature,
whether a similar counterexample exists already for one-register
automata.
One of the main results of this paper shows that it does not:
every one-register finite-memory automaton
satisfies the classical pumping lemma.

In general, the pumping lemma for quasi-regular languages
requires the pumped patterns to be generated by a finite-order
permutation.
In this variant, the pumping lemma depends on two constants:
the input length $p$ and the order of the permutation $d$.

For a regular language, the value of $p$
can always be chosen to be
the number of states of a finite automaton accepting the language,
regardless of whether it is deterministic or nondeterministic.
For finite-memory automata, this value
yields a finite-order permutation
of order at most
$(2r)!=O(e^{2r\log r})$~\cite[Theorem 7]{Danieli26},
where $r$ is the number of registers.

In this paper, we study the descriptional
complexity of pumping constants for several register-based
automata models.
Specifically, we derive upper bounds
on the least possible
pumping length $p$ and permutation-order $d$
in terms of the number of states and registers
of the automaton.
We consider one-register finite-memory automata,
multi-register finite-memory automata, and
hierarchical register automata.
Where possible, we also prove
lower bounds showing that these estimates
cannot be substantially improved.
The results are summarized in~\Cref{t: pl fma},
where $n$ is the number of states and $r$ is the number of registers.
Thus, our results exhibit a trade-off between the pumping length $p$ and
the permutation order $d$: short pumping length forces larger
permutation order, while larger pumping length allows the order to be
reduced, first to $2$ for general finite-memory automata and, in the
one-register and hierarchical cases, to $1$.

The table records only the positive upper bounds.
We also prove two complementary negative and algorithmic results.
First, for multi-register automata, it is undecidable
whether the recognized language satisfies the pumping lemma
for given constants $p$ and $d$;
it is also undecidable whether the classical pumping lemma
holds at all.
Thus, there is no general algorithm for computing optimal
pumping constants.
Second, for non-guessing one-register automata,
the situation improves: the minimal
classical pumping length is computable.

\begin{table}[h]
\centering
\renewcommand{\arraystretch}{1.4}
\begin{tabularx}{0.75\textwidth}{| c | Y | Y |}
\hline
\multicolumn{1}{|c|}{Model}
& Pumping length
& Permutation order \\
\hline
\multirow{2}{*}{$\FMA_1$}
& $n$
& $2$ \\
& $O(n^2)$
& $1$ \\
\hline
\multirow{2}{*}{$\FMA_r$}
& $n$
& $e^{(1+o(1))\sqrt{r\log r}}$ \\
& $O((r^2)!n)$
& $2$ \\
\hline
$\HRA_r$
& $O(n^{r+1})$
& $1$ \\
\hline
\end{tabularx}
\caption{Upper bounds for pumping constants proved in this paper.}
\label{t: pl fma}
\end{table}

\paragraph{Related work}

The constant in the pumping lemma for
regular languages
has been studied extensively
in terms of descriptional complexity (its relation
to the number of states/ transitions)~\cite{GruberHR26,GruberHR24},
computational complexity (the difficulty of computing
the minimal pumping constant for a given
regular language)~\cite{GruberHR26},
and operational complexity (how the minimal pumping constant changes
under Boolean and regular operations)~\cite{DassowI22,HolzerR23}.

The generalized pumping lemma
for quasi-regular languages
has appeared in several equivalent forms in the literature.
A formal presentation appears in~\cite[Theorem 3.1]{NakanishiTS23},
and the result was independently proved earlier by two
groups; see~\cite{Klin14} and~\cite[Lemma 46]{Veldhoven18}.
A refinement of this pumping lemma using finite-order
permutations was presented in~\cite[Theorem 7]{Danieli26}.

\paragraph{Organization.}
The results are presented in increasing order of technical difficulty.
After recalling finite-memory automata and the relevant pumping
definitions, we first treat general multi-register automata in
\Cref{s: fma r pl}.  There we prove two complementary positive
results: with the classical pumping length $p=n$, the required
permutation order can be bounded in terms of Landau's function; and, if the pumping length is increased by a constant depending
only on $r$, then permutations of order at most two always suffice.
We also prove the undecidability results there, showing that optimal
pumping constants cannot be computed in general.

We then turn to the more delicate one-register case in
\Cref{s: one register}.  This is the first point at which the
permutation can be eliminated completely: every one-register
finite-memory automaton satisfies the classical pumping lemma.  We also
show that the resulting quadratic pumping length is optimal up to
constant factors, and we prove that the minimal classical pumping
length is computable for non-guessing one-register automata.  Finally,
we extend the classical pumping result to hierarchical register
automata, whose proof combines the one-register argument with the
hierarchical structure of register updates.


\section{Finite-memory automata}

Throughout this paper, we employ the following conventions.
\begin{itemize}
\item
$\Sigma$ is a fixed infinite alphabet.
\item
Symbols in $\Sigma$ are denoted by $\sigma$,\,$\theta$, etc.,
sometimes indexed or primed.
\item
Bold lowercase Greek letters $\bsigma$, $\btheta$, etc.,
also sometimes indexed or primed,
denote words over $\Sigma$.
\item
Symbols that occur in a word denoted by a boldface letter
are always denoted by
the same \emph{non-boldface} letter with an appropriate subscript.
For example,
the letters that occur in $\bsigma^\prime$
are denoted by~$\sigma_i^\prime$.
\item
For a word
$\bsigma = \sigma_1 \sigma_2 \cdots \sigma_n \in \Sigma^\ast$,
we denote by $[\bsigma]$
the set of letters occurring in $\bsigma$:
\[
[\bsigma] = \{ \sigma_1 , \sigma_2 , \ldots , \sigma_n \}
\,
,
\]
and
call this set the {\em contents} of $\bsigma$.
\item
For a subset $\Omega$ of $\Sigma$ and a positive integer $r$,
we denote by $\Omega^{r_{\neq}}$ the subset of $\Omega^r$
whose elements do not contain repeated letters:
\[
\Omega^{r_{\neq}} =
\left\{
\sigma_1\ldots\sigma_{r} \in \Omega^{r}
:
\mbox{for all} \ i \ \mbox{and} \ j, \
\mbox{if} \ i \neq j \mbox{, then} \ \sigma_i \neq \sigma_j
\right\}
\,
.
\]
\item 
For a sequence of words
$\bsigma_1,\bsigma_2,\ldots,\bsigma_m$
we denote by $\prod_{i=1}^m\bsigma_i$,
the word which is the concatenation of
$\bsigma_i$'s in ascending order:
$\prod_{i=1}^m\bsigma_i = \bsigma_1\bsigma_2\cdots\bsigma_m.$
\end{itemize}

Next, we recall the definition of finite-memory automata.
Originally, finite-memory automata defined in~\cite{KaminskiF94}
were only allowed to store symbols that appeared so far;
later in~\cite{KaminskiZ08}, the model was extended with the ability
to non-deterministically reassign values to its registers.

\begin{definition}
\label{d: fma}
An \emph{$r$-register finite-memory automaton}
is a system
$\bA =
\langle S,s_0,F,D,\btheta_0,\mu \rangle$
whose components are as follows.
\begin{itemize}
\item $S,s_0\in S,$ and $ F\subseteq S$
are the finite set of \emph{states},
the \emph{initial state},
and the subset
of \emph{accepting states}, respectively.
\item $D \subset \Sigma$ is a finite set
of {\em distinguished} letters (constants).
\item $\btheta_0=
\tuple{\theta_{0,1},\theta_{0,2},\ldots,\theta_{0,r}}
\in (\Sigma\setminus D)^{r_{\neq}}$
is the initial register assignment,
where $\theta_{0,i}$ is the initial value of register $i$,
for $i=1,2,\ldots,r$.
\item $\mu$ is a finite set of transition
rules of the form $s\trans{\varphi}s^\prime$,
where $s,s^\prime\in S$
and the \emph{transition constraint}
$\varphi$
is a Boolean combination of equalities involving the variables
$x_1,x_2,\ldots,x_r,y,x_1^\prime,x_2^\prime,\ldots,x_r^\prime,$
and letters from $D$.
\end{itemize}
\end{definition}

Intuitively, $\varphi$ must
be satisfied by the current register contents
$x_1,x_2,\ldots,x_r$,
the input symbol $y$,
and the next register contents
$x_1^\prime,x_2^\prime,\ldots,x_r^\prime$.

The set $\Reg=(\Sigma\setminus D)^{r_{\neq}}$
contains all possible register assignments,
i.e., all $r$-tuples of pairwise distinct non-constants.
A \emph{configuration} $\conf s {\btheta} \in S\times \Reg$ of
$\bA$,
consists of a state $s\in S$ and register values
$\btheta = (\theta_1,\theta_2,\ldots, \theta_r) \in \Reg$,
where $\theta_i$ is the letter stored in the $i$th register.
If $s\in F$, it is called an \emph{accepting} configuration.
The configuration $\conf{s_0}{\btheta_0}$
is called the initial configuration.

A transition rule $s\trans{\varphi}s^\prime$ and
letters $\tau_1,\tau_2,\ldots,\tau_r$,
$\sigma$,
and $\tau_1^\prime,\tau_2^\prime,\ldots,\tau_r^\prime$
that satisfy $\varphi$
induce a transition
\begin{equation}
\label{eq: tran letter}
\conf{s}{\tuple{\tau_1,\tau_2,\ldots,\tau_r}}
\trans{\sigma}
\conf{s^\prime}{\tuple{\tau_1^\prime,\tau_2^\prime,\ldots,\tau_r^\prime}}
\, .
\end{equation}

A \emph{run} of $\bA$ on a word
$\bsigma=\sigma_1\sigma_2\cdots \sigma_m
\in \Sigma^\ast$ is a sequence of consecutive configurations
\begin{align} \label{eq:run}
{\conf {s_0} {\btheta_0}}
\trans{\sigma_1}
{\conf {s_1}{\btheta_1}}
\trans{\sigma_2}
\cdots 
\trans{\sigma_m} 
\conf{s_m}{\btheta_m}.
\end{align}
We shall also write
$\conf {s_0} {\btheta_0}
\transword{\bsigma} \conf{s_m}{\btheta_m}$.

A run is accepting
if it ends in an accepting configuration.
In such a case, we say that $\bsigma$ is accepted by $\bA$.
The set of all accepted words is denoted by $\langof{\bA}$
and is called the language of $\bA$.
Languages of finite-memory automata
are called
\emph{quasi-regular languages}.




\begin{example}[{\cite[Example 1]{KaminskiF94}}]
\label{e: two}
Consider a one-register finite-memory automaton $\bA$,
whose self-explanatory diagram is shown
in~\Cref{fig: aut l_eq},
where the symbol $\top$ denotes
the truth formula.
The language of $\bA$ consists
precisely of
all words over $\Sigma$
in which some letter appears more than once:
\[
L_{\text{eq}}:=\langof{\bA} = 
\{ \sigma_1 \sigma_2 \cdots \sigma_m \in\Sigma^\ast: \ \mbox{there exist} \
1 \leq i < j \leq m \ \mbox{such that} \ \sigma_i = \sigma_j \}\, .
\]

\begin{figure}[ht]
\begin{center}
\begin{tikzpicture}[scale=0.15]
\tikzstyle{every node}+=[inner sep=0pt]
\draw [black] (6.4,-28.21) -- (15.35,-28.29);
\fill [black] (15.35,-28.29) -- (14.55,-27.78) -- (14.55,-28.78);
\draw [black] (18.4,-28.2) circle (3);
\draw (18.4,-28.2) node {$s_0$};
\draw [black] (39.2,-28.3) circle (3);
\draw (39.2,-28.3) node {$s$};
\draw [black] (60.3,-28.2) circle (3);
\draw (60.3,-28.2) node {$f$};
\draw [black] (60.3,-28.2) circle (2.4);
\draw [black] (17.077,-25.52) arc (234:-54:2.25);
\draw (18.4,-20.95) node [above] {$\top$};
\fill [black] (19.72,-25.52) -- (20.6,-25.17) -- (19.79,-24.58);
\draw [black] (21.4,-28.21) -- (36.2,-28.29);
\fill [black] (36.2,-28.29) -- (35.4,-27.78) -- (35.4,-28.78);
\draw (28.8,-28.85) node [below] {$y=x_1^\prime$};
\draw [black] (42.2,-28.29) -- (57.3,-28.21);
\fill [black] (57.3,-28.21) -- (56.5,-27.72) -- (56.5,-28.72);
\draw (49.75,-30.05) node [below] {$x_1=y$};
\draw [black] (37.877,-25.62) arc (234:-54:2.25);
\draw (39.2,-21.05) node [above] {$x_1=x_1^\prime$};
\fill [black] (40.52,-25.62) -- (41.4,-25.27) -- (40.59,-24.68);
\draw [black] (58.977,-25.52) arc (234:-54:2.25);
\draw (60.3,-20.95) node [above] {$\top$};
\fill [black] (61.62,-25.52) -- (62.5,-25.17) -- (61.69,-24.58);
\end{tikzpicture}
\end{center}
\centering
\caption{The graph representation of $\bA$.}
\label{fig: aut l_eq}
\end{figure}
\end{example}

\begin{example}[{\cite[Example 4]{KaminskiZ08}}]
\label{e: two and hald}
The language
\[
L_{\mathrm{last}}=
\{ \sigma_1 \sigma_2 \cdots \sigma_m \in \Sigma^\ast :
m\geq 2,\ \mbox{and for all} \
1 \leq i  < m, \ \sigma_i \neq \sigma_m \}\, .
\]
consists of all words in which the last letter does not appear
earlier and it is accepted by the automaton $\bB$
shown in Figure~\ref{fig: auto l last}.

\begin{figure}[H]
\centering
\begin{tikzpicture}[scale=0.15]
\tikzstyle{every node}+=[inner sep=0pt]
\draw [black] (6.4,-28.21) -- (15.35,-28.29);
\fill [black] (15.35,-28.29) -- (14.55,-27.78) -- (14.55,-28.78);
\draw [black] (18.4,-28.2) circle (3);
\draw (18.4,-28.2) node {$s_0$};
\draw [black] (39.2,-28.3) circle (3);
\draw (39.2,-28.3) node {$s$};
\draw [black] (60.3,-28.2) circle (3);
\draw (60.3,-28.2) node {$f$};
\draw [black] (60.3,-28.2) circle (2.4);
\draw [black] (21.4,-28.21) -- (36.2,-28.29);
\fill [black] (36.2,-28.29) -- (35.4,-27.78) -- (35.4,-28.78);
\draw (28.8,-30.75) node [below] {$y\neq x_1^\prime$};
\draw [black] (42.2,-28.29) -- (57.3,-28.21);
\fill [black] (57.3,-28.21) -- (56.5,-27.72) -- (56.5,-28.72);
\draw (49.75,-31.75) node [below] {$x_1=y$};
\draw [black] (37.877,-25.62) arc (234:-54:2.25);
\draw (39.2,-21.05) node [above] {$x_1=x_1^\prime\neq y$};
\fill [black] (40.52,-25.62) -- (41.4,-25.27) -- (40.59,-24.68);
\end{tikzpicture}
\caption{The graph representation of $\bB$.}
\label{fig: auto l last}
\end{figure}
\end{example}

\begin{remark}
\label{r: guessing}
By Definition~\ref{d: fma},
an automaton may reassign register values 
non-deterministically.
Such transitions, often called \emph{guesses},
strictly increase the expressive
power of finite-memory automata~\cite{ChengK98,KaminskiZ08},
e.g., $L_{\text{last}}$ from~\Cref{e: two and hald}
cannot be recognized by finite-memory automata without guesses.
\end{remark}

\begin{definition}
A finite-memory automaton is called \emph{non-guessing}
if every constraint $\varphi$ appearing in a transition rule
forces every ``post-register'' value to be
equal to some ``pre-register'' value or the current symbol.
\end{definition}

Recall the pumping lemma for regular languages.

\begin{lemma}[{\cite[Lemma 8]{RabinS59}}]
\label{l: pl}
Let $L$ be a regular language over a finite alphabet.
Then there is a constant $p$ for which every word $\bsigma\in L$,
with $|\bsigma|\geq p$, can be decomposed as 
$\bsigma=\btau\bupsilon\bvarphi$ such that
\begin{enumerate}
\item $|\bupsilon|>0$, $|\btau\bupsilon|\leq p$, and
\item for all $k=0,1,2,\ldots,$
\begin{equation}
\btau\bupsilon^k\bvarphi\in L\, .
\end{equation}
\end{enumerate}
\end{lemma}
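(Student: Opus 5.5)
The statement is the classical pumping lemma of Rabin and Scott, and I would prove it with the standard pigeonhole argument on a finite automaton. Since $L$ is regular over a finite alphabet, fix a deterministic (or nondeterministic) finite automaton with state set $Q$ accepting $L$, and set $p=|Q|$. Let $\bsigma=\sigma_1\sigma_2\cdots\sigma_m\in L$ with $m\geq p$. Fix an accepting run $q_0\trans{\sigma_1}q_1\trans{\sigma_2}\cdots\trans{\sigma_m}q_m$ of the automaton on $\bsigma$.

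The key step is to look at the prefix of this run consisting of the $p+1$ states $q_0,q_1,\ldots,q_p$. Since there are only $p$ states, the pigeonhole principle gives indices $0\leq i<j\leq p$ with $q_i=q_j$. Define
\[
\btau=\sigma_1\cdots\sigma_i,\qquad \bupsilon=\sigma_{i+1}\cdots\sigma_j,\qquad \bvarphi=\sigma_{j+1}\cdots\sigma_m .
\]
Then $|\bupsilon|=j-i>0$ and $|\btau\bupsilon|=j\leq p$, which is the first condition.

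For the second condition, the run splits as $q_0\transword{\btau}q_i\transword{\bupsilon}q_i\transword{\bvarphi}q_m$. The middle segment is a loop at $q_i$, so it can be omitted ($k=0$) or repeated any number $k\geq 1$ of times. Gluing the segments gives a run on $\btau\bupsilon^k\bvarphi$ that still ends in the accepting state $q_m$. Hence $\btau\bupsilon^k\bvarphi\in L$ for all $k\geq 0$.

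There is no real obstacle here. The only care needed is to apply the pigeonhole principle within the first $p$ letters rather than the whole word, so that the bound $|\btau\bupsilon|\leq p$ holds. It is also worth noting that nondeterminism causes no difficulty, because we fix a single accepting run before splicing it.
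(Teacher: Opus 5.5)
Your proof is correct and is the standard pigeonhole argument behind this lemma. The paper cites the lemma without reproving it, but it notes that $p$ can be the number of states of any accepting automaton, deterministic or not, and its proofs of \Cref{t: pl one} and \Cref{l: same cfg} use the same idea as yours: find a repeated state among the first $p+1$ configurations of a fixed accepting run and splice the loop.
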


\begin{example}
The language $L_{\text{eq}}$ from~\Cref{e: two}
satisfies the pumping lemma for $p=3$.
Indeed, for $\bsigma\in L_{\text{eq}}$,
$\bupsilon$ can be the first symbol not belonging to
a pair of matching symbols.
Moreover, $L_{\text{eq}}$ does not satisfy the pumping lemma for $p=2$,
because the word $\sigma\sigma$ cannot be pumped for $k=0$.

The language $L_{\text{last}}$ from~\Cref{e: two and hald}
also satisfies the pumping lemma for $p=3$, but not $p=2$.
Indeed, every decomposition in which $\bupsilon$ is
not the last symbol works
for words of length at least three,
however, the word $\sigma\tau$ for $\sigma\neq\tau$
cannot be pumped for $k=0$.
\end{example}

\begin{example}
\label{e: three}
Let $\bC$ be a two-register finite-memory automaton
depicted in~\Cref{fig: aut l_no-pl}.
For readability, whenever the post-transition 
register variables, $x_1^\prime,x_2^\prime$,
are absent from a constraint, they are assumed to be preserved,
i.e., $x_1^\prime=x_1$ and $x_2^\prime=x_2$.

\begin{figure}[H]
\begin{center}
\begin{tikzpicture}[scale=0.15]
\tikzstyle{every node}+=[inner sep=0pt]
\draw [black] (24.2,-19) circle (3);
\draw (24.2,-19) node {$s_0$};
\draw [black] (39.9,-19) circle (3);
\draw (39.9,-19) node {$s_1$};
\draw [black] (55.8,-19) circle (3);
\draw (55.8,-19) node {$s_2$};
\draw [black] (55.8,-37.4) circle (3);
\draw (55.8,-37.4) node {$s_3$};
\draw [black] (39.9,-37.4) circle (3);
\draw (39.9,-37.4) node {$s_4$};
\draw [black] (24.2,-37.4) circle (3);
\draw (24.2,-37.4) node {$s_5$};
\draw [black] (24.2,-37.4) circle (2.4);
\draw [black] (12,-19) -- (21.2,-19);
\fill [black] (21.2,-19) -- (20.4,-18.5) -- (20.4,-19.5);
\draw [black] (42.9,-19) -- (52.8,-19);
\fill [black] (52.8,-19) -- (52,-18.5) -- (52,-19.5);
\draw (47.85,-19.5) node [below] {$y=x_2^\prime$};
\draw [black] (27.2,-19) -- (36.9,-19);
\fill [black] (36.9,-19) -- (36.1,-18.5) -- (36.1,-19.5);
\draw (32.05,-19.5) node [below] {$y=x_1^\prime$};
\draw [black] (55.8,-22) -- (55.8,-34.4);
\fill [black] (55.8,-34.4) -- (56.3,-33.6) -- (55.3,-33.6);
\draw (57.3,-28.2) node [right] {$x_1=y$};
\draw [black] (52.8,-37.4) -- (42.9,-37.4);
\fill [black] (42.9,-37.4) -- (43.7,-37.9) -- (43.7,-36.9);
\draw (47.85,-36.9) node [above] {$y=x_1^\prime$};
\draw [black] (39.9,-34.4) -- (39.9,-22);
\fill [black] (39.9,-22) -- (39.4,-22.8) -- (40.4,-22.8);
\draw (40.4,-28.2) node [right] {$x_2=y$};
\draw [black] (37.95,-21.28) -- (26.15,-35.12);
\fill [black] (26.15,-35.12) -- (27.05,-34.83) -- (26.29,-34.18);
\draw (31.5,-26.76) node [left] {$x_1=y$};
\end{tikzpicture}
\end{center}
\caption{The graph representation of $\bC$.}
\label{fig: aut l_no-pl}
\end{figure}

The automaton $\bC$ impose overlapping
equality constraints, see~\Cref{fig: forms}.

\begin{figure}[htbp]
\centering
\begin{tikzpicture}[scale=0.9][
thick,
star node/.style={font=\LARGE\bfseries, inner sep=2pt}
]

\foreach \i in {1,...,14} {
\node (n\i) at (\i*0.6, 0) {$*$};
}

\draw (n2.north)  -- ++(0, 0.4) -| (n5.north);
\draw (n6.north)  -- ++(0, 0.4) -| (n9.north);
\draw (n10.north) -- ++(0, 0.4) -| (n13.north);

\draw (n1.south)  -- ++(0,-0.4) -| (n3.south);
\draw (n4.south)  -- ++(0,-0.4) -| (n7.south);
\draw (n8.south)  -- ++(0,-0.4) -| (n11.south);
\draw (n12.south) -- ++(0,-0.4) -| (n14.south);

\end{tikzpicture}
\caption{Equality constraints for words in $L(\bC)$.}
\label{fig: forms}
\end{figure}

As a result, the language accepted
by this automaton
$L_{\text{no-pl}}:=\langof{\bC}$ does not 
satisfy~\Cref{l: pl} for any $p$~\cite[Example 3]{KaminskiF94}.

\end{example}

The pumping lemmas in this paper involve the following definitions.

\begin{definition}
\label{d: permutation}
The order of a permutation $\alpha : \Sigma \rightarrow \Sigma$,
$\operatorname{ord}(\alpha)$,
is the smallest positive integer $k$ such that 
$\alpha^k$ is the identity permutation:
$\alpha^{k} = \id$,
\footnote{
As usual, the product of two permutations is their composition
and powers are defined recursively by
$\alpha^k = \alpha \circ \alpha^{k-1}$.}
if no such $k$ exists, the order of $\alpha$
is infinite.
Moreover, we say that $\alpha$ is a $D$-{\em permutation}
if it is invariant on $D$
(i.e., for all $d\in D,\alpha(d)=d$).
\end{definition}

\begin{definition}
\label{d: extension}
Let $\alpha$ be a permutation of $\Sigma$.
We extend $\alpha $ to configurations by
$\alpha(s,\btheta) = (s,\alpha(\btheta))$.
\end{definition}

\begin{proposition}[Equivariance of FMA~{\cite[Lemma 1]{KaminskiF94}}]
\label{p: permutation word fma}
Let $\alpha$ be a $D$-permutation of $\Sigma$,
$\bsigma \in \Sigma^\ast$, and
let $c$ and $c^\prime$ be configurations of $\bA$ such that
$c \overset{\bsigma}{\longrightarrow} c^\prime$.
Then
$\alpha(c)
\overset{\alpha(\bsigma)}{\longrightarrow}
\alpha(c^\prime)$.
\end{proposition}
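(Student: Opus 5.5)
The claim follows by induction on the length of $\bsigma$, so the plan is to reduce it to the single-letter case. A run $c \overset{\bsigma}{\longrightarrow} c'$ is a chain of one-letter transitions $c=c_0\trans{\sigma_1}c_1\trans{\sigma_2}\cdots\trans{\sigma_m}c_m=c'$. Because $\alpha(\bsigma)=\alpha(\sigma_1)\cdots\alpha(\sigma_m)$ and $\alpha$ acts on configurations letterwise (\Cref{d: extension}), it is enough to prove that each step $c_{i-1}\trans{\sigma_i}c_i$ maps to a step $\alpha(c_{i-1})\trans{\alpha(\sigma_i)}\alpha(c_i)$. Concatenating these steps then gives the statement. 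The empty word is immediate, since both sides are then the identity relation.

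For one step, write $c=\conf{s}{\btau}$ and $c'=\conf{s'}{\btau'}$. By definition there is a rule $s\trans{\varphi}s'$ in $\mu$ such that the assignment $x_j\mapsto\tau_j$, $y\mapsto\sigma$, $x_j'\mapsto\tau_j'$ satisfies $\varphi$. I will use the same rule and show that $\varphi$ is also satisfied by the permuted assignment $x_j\mapsto\alpha(\tau_j)$, $y\mapsto\alpha(\sigma)$, $x_j'\mapsto\alpha(\tau_j')$. The constraint $\varphi$ is a Boolean combination of atoms of three kinds: $u=v$ between two variables, $u=d$ with $d\in D$, and $d=d'$ between constants. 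It therefore suffices to check that each atom keeps its truth value.
\begin{itemize}
\item For $u=v$ this holds because $\alpha$ is injective.
\item For $u=d$ it holds because $\alpha(a)=d$ iff $a=\alpha^{-1}(d)=d$, using that $\alpha$ fixes $D$ pointwise and so does $\alpha^{-1}$.
\item Atoms $d=d'$ contain no variables, so their truth value does not change.
\end{itemize}
An induction on the structure of the Boolean formula then shows that $\varphi$ keeps its truth value as a whole.

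It remains to check that $\alpha(c)$ and $\alpha(c')$ are genuine configurations, that is, that $\alpha(\btau)\in\Reg=(\Sigma\setminus D)^{r_{\neq}}$. Pairwise distinctness is preserved because $\alpha$ is injective. Avoidance of $D$ is preserved because $\alpha$ maps $\Sigma\setminus D$ onto itself: it is a bijection that fixes $D$. The state component is unchanged, so acceptance is untouched as well.

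I do not expect a real obstacle. The only point that needs care is the constant atoms and the $\Reg$ condition, which is exactly where the hypothesis that $\alpha$ is a $D$-permutation is used.
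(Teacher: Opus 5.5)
Your proof is correct: the reduction to single steps, the atom-by-atom invariance of $\varphi$ (injectivity for variable equalities, $\alpha$ fixing $D$ pointwise for constant atoms), and the check that $\alpha$ maps $\Reg$ into itself are all sound. The paper gives no proof of its own and only cites \cite[Lemma 1]{KaminskiF94}; your argument is the standard one behind that lemma, adapted to the paper's formulation with arbitrary configurations.
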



\section{Pumping lemmas for finite-memory automata}
\label{s: fma r pl}
Due to the close connection with orbit-finite theory
and symmetries, most generalizations from finite alphabets
to infinite alphabets use permutations
on the alphabet~\cite{Bojanczyk19,BojanczykKL14}.
The following definition is the generalized variant
of the pumping lemma for infinite alphabets.

\begin{definition}
\label{d: pumpable word}
Let $L\subseteq\Sigma^\ast$ and $\bsigma\in L$,
we say that $\bsigma$ is \emph{pumpable} for input length $p$
and permutation order $d$ with respect to $L$,
in short $(p,d)$\emph{-pumpable} when $L$ is clear from context,
if it satisfies
the following condition:

There is a decomposition $\bsigma=\btau\bupsilon\bvarphi$
and a $D$-permutation $\alpha$ of order at most $d$
such that:
\begin{enumerate}
\item $\size{\bupsilon}>0$ and $\size{\btau\bupsilon}\leq p$;
\item for all $k=1,2,\ldots,$
\begin{equation}
\label{eq: k fma}
\btau\bupsilon\alpha(\bupsilon)\alpha^2(\bupsilon)\cdots
\alpha^k(\bupsilon)\alpha^k(\bvarphi)\in L;
\end{equation}
and
\item
\begin{equation}
\label{eq: zero}
\btau\alpha^{-1}(\bvarphi)\in L.
\end{equation}
\end{enumerate}
\end{definition}

\begin{definition}
\label{d: pumpable language}
A language $L\subseteq\Sigma^\ast$
is called \emph{pumpable} for input length $p$
and permutation order $d$,
or $(p,d)$\emph{-pumpable},
if all $\bsigma\in L$ with $\size{\bsigma}\geq p$
are $(p,d)$-pumpable with respect to $L$.

We say that a language satisfies the \emph{classical} pumping lemma
if it is $(p,1)$-pumpable for some $p$.
\end{definition}

\subsection{Number of states as input length}

In~\cite[Theorem 7]{Danieli26}, it was shown
that the language of a
finite-memory automaton with $n$ states
and $r$ registers is $(n,(2r)!)$-pumpable.
In the proof it is noted
that the order of the permutation is actually
at most $g(2r)$,
where $g$ is Landau's function~--
the largest order of an element of
the symmetric group~\cite{Landau03}.
Equivalently, it is the maximal least common multiple
of positive integer partitions,
\[
g(r)=
\max_{\lambda_1+\cdots+\lambda_k=r}
\operatorname{lcm}(\lambda_1,\ldots,\lambda_k).
\]
We improve this bound.

\begin{theorem}
\label{t: pl one}
Let \(\bA\) be a finite-memory automaton with \(n\) states and \(r\)
registers.
\begin{enumerate}
\item If \(1\leq r\leq 6\), then \(L(\bA)\) is
\((n,r+1)\)-pumpable.
\item If \(r\geq 7\), then \(L(\bA)\) is
\((n,g(r))\)-pumpable.
\end{enumerate}
\end{theorem}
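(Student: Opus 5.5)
The plan follows the proof of \cite[Theorem 7]{Danieli26}; the only new part is a sharper choice of permutation. Take $\bsigma\in L(\bA)$ with $\size{\bsigma}\geq n$ and fix an accepting run. The first $n+1$ configurations $c_0,\ldots,c_n$ contain two, $c_i=\conf{s}{\btheta}$ and $c_j=\conf{s}{\btheta^\prime}$ with $i<j\leq n$, that share a state. This gives the decomposition $\bsigma=\btau\bupsilon\bvarphi$ with $\size{\btau}=i$ and $\size{\btau\bupsilon}=j\leq n$.

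Suppose $\alpha$ is any $D$-permutation with $\alpha(\btheta)=\btheta^\prime$, i.e.\ $\alpha(c_i)=c_j$. Then \Cref{p: permutation word fma} applied repeatedly to $c_i\transword{\bupsilon}c_j$ and $c_j\transword{\bvarphi}c_m$ gives~\eqref{eq: k fma}. Applying $\alpha^{-1}$ to $c_j\transword{\bvarphi}c_m$ gives $c_i\transword{\alpha^{-1}(\bvarphi)}\alpha^{-1}(c_m)$, which is accepting, so~\eqref{eq: zero} holds. Hence everything reduces to a purely combinatorial claim: the partial injection $\pi:\theta_k\mapsto\theta^\prime_k$, $k=1,\ldots,r$, on non-constants extends to a $D$-permutation of order at most $r+1$ for $r\leq6$, and at most $g(r)$ for $r\geq7$.

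To analyse $\pi$, view it as a functional digraph with exactly $r$ arrows on at most $2r$ letters. Each component is either a cycle of length $c$, using $c$ arrows, or a path with $e\geq1$ arrows on $e+1$ letters. Write the cycle lengths as $c_1,\ldots,c_a$ and the path lengths as $e_1,\ldots,e_b$, so that $\sum c_t+\sum e_t=r$. An extension $\alpha$ keeps the cycles, fixes every letter outside the components, and closes the paths into cycles. We have real freedom here. A single path can be closed into a cycle of length $e+1$. Several paths can be concatenated into one cycle of length $\sum(e_t+1)$. Since $\Sigma\setminus D$ is infinite, any such cycle can be padded with fresh letters to any larger length. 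The order of $\alpha$ is the lcm of the resulting cycle lengths. The goal is therefore an arithmetic lemma: for every such data $(c_t),(e_t)$ we can choose closings and paddings whose lcm is at most $\max(r+1,g(r))$. This suffices because $g(r)\geq r+1$ exactly when $r\geq7$ (e.g.\ $g(7)=12$, while $g(6)=6<7$). The value $r+1$ is forced in general: a single path with $r$ arrows cannot be closed into anything shorter than $r+1$.

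I would try two natural strategies for the lemma. The first merges all paths into one cycle of length $E+b$, where $E=\sum e_t$; since $b\leq E$, this length is at most $2E$. The second pads each path cycle up to a length that divides (or equals a multiple of) the lcm of the existing cycles, or of the other closed paths, so that it contributes no new prime factors.

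In the case $a=0$ (only paths), merging everything gives order $E+b$. Closing each path separately gives $\operatorname{lcm}(e_t+1)$. Padding each $e_t+1$ up to a common value $m\geq\max_t(e_t+1)$ gives order $m\leq r+1$. So this case costs at most $r+1$.

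In the mixed case, I would attach the paths to the cycle structure. One option pads each closed path to length $\ell\cdot\lceil (e_t+1)/\ell\rceil$ for a well-chosen existing cycle length $\ell$. Another compares $\operatorname{lcm}(c_1,\ldots,c_a,E+b)$, or the single-cycle option, against the partition $(c_1,\ldots,c_a,E)$ of $r$. For $r\leq6$ the finitely many shapes can be checked by hand against $r+1$.

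The main obstacle is the general mixed case for $r\geq7$. There I must show that the best of these options never exceeds $g(r)$. This likely needs a careful case split on whether $E$ is small relative to the cycles, together with known growth properties of Landau's function such as $g(r+1)\geq g(r)$ and $g(r)\geq 2g(r-k)$ for suitable $k$.
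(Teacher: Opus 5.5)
\textbf{Genuine gap.} Your reduction matches the paper's: take two equal states among $c_0,\ldots,c_n$, use equivariance (\Cref{l: pump}), and analyse the cycle/path digraph of $\theta_k\mapsto\theta'_k$ (\Cref{l: perm r}). The problem is that you never prove the arithmetic lemma, which is the actual content of the theorem.

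\begin{itemize}
\item In the mixed case (cycles $c_1,\ldots,c_p$ and paths with $a_1,\ldots,a_t$ arrows, $p,t>0$) you only list candidate strategies. You then defer to an unspecified case split and to growth properties of Landau's function.
\item For $r\le 6$ you defer to a hand check that you do not carry out.
\item Your strategies do not obviously close. Padding each path separately to a multiple of a cycle length $\ell$ can overshoot. For $r=11$ with one fixed point and paths with $1,2,3,4$ arrows, the only choice is $\ell=1$. This gives order $\operatorname{lcm}(2,3,4,5)=60>g(11)=30$.
\end{itemize}

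\textbf{The missing idea.} Combine two observations you already have: pad all paths to a \emph{common} length (which you used only when $p=0$), and make that length a multiple of a cycle length.
\begin{itemize}
\item Let $B=\max_j a_j$, and let $b$ be the least multiple of $c_1$ exceeding $B$. Then $B<b\le B+c_1$. Pad every path to a cycle of length exactly $b$.
\item Since $c_1\mid b$, we get $\operatorname{lcm}(c_1,\ldots,c_p,b)=\operatorname{lcm}(c_2,\ldots,c_p,b)$.
\item Also $b+c_2+\cdots+c_p\le B+\sum_i c_i\le r$. So $b,c_2,\ldots,c_p$, together with $1$'s, form a partition of $r$, and the order is at most $g(r)$.
\item Adding the pure-cycle case (order at most $g(r)$) and your pure-path case (order at most $r+1$) gives the bound $\max\{g(r),r+1\}$ uniformly, as in \Cref{l: closure}. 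No monotonicity or doubling properties of $g$ are needed.
\item The case $r\le 6$ then follows from $g(1),\ldots,g(6)=1,2,3,4,6,6$, with no shape-by-shape check.
\end{itemize}

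\textbf{Remaining gap.} You still need an argument that $g(r)\ge r+1$ for all $r\ge 7$. The paper uses the partitions $m+(m+1)$ for odd $r$, $3+5$ for $r=8$, and $(m-1)+(m+1)$ for even $r\ge 10$. Your claim that this inequality holds ``exactly when'' $r\ge 7$ is also slightly off, since $g(5)=6=5+1$; this is harmless for the theorem.
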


We use the following standard pumping observation.

\begin{lemma}
\label{l: pump}
Suppose that an accepting run on
\(\bsigma=\btau\bupsilon\bvarphi\) has the form
\[
c_0\transword{\btau}c_1
\transword{\bupsilon}c_2
\transword{\bvarphi}c_f,
\]
and that a finite-order \(D\)-permutation \(\alpha\) satisfies
\(\alpha(c_1)=c_2\).  Then the loop labelled by \(\bupsilon\) can be
pumped as in~\eqref{eq: k fma} and removed as in~\eqref{eq: zero}.
\end{lemma}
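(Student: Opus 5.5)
Equivariance (Proposition~\ref{p: permutation word fma}) does essentially all the work; we write the loop and its iterates out explicitly. Set $\bupsilon_j:=\alpha^j(\bupsilon)$ and $c_{j+1}:=\alpha^j(c_1)$ for $j\ge 0$, so that $c_2=\alpha(c_1)$ matches the hypothesis. Applying $\alpha^j$, a $D$-permutation, to the segment $c_1\transword{\bupsilon}c_2$ gives
\[
\alpha^j(c_1)\transword{\alpha^j(\bupsilon)}\alpha^j(c_2)=\alpha^{j+1}(c_1),
\]
that is, $c_{j+1}\transword{\bupsilon_j}c_{j+2}$. Chaining these segments for $j=0,1,\ldots,k$ after the prefix run $c_0\transword{\btau}c_1$ produces a run
\[
c_0\transword{\btau\bupsilon\alpha(\bupsilon)\cdots\alpha^k(\bupsilon)}\alpha^{k+1}(c_1)=\alpha^k(c_2).
\]

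To finish the pumped word, we apply $\alpha^k$ to the suffix run $c_2\transword{\bvarphi}c_f$. This gives $\alpha^k(c_2)\transword{\alpha^k(\bvarphi)}\alpha^k(c_f)$, which starts exactly where the previous run ended. The final configuration $\alpha^k(c_f)$ has the same state as $c_f$, because by Definition~\ref{d: extension} permutations act only on register contents. Hence it is accepting, and~\eqref{eq: k fma} holds for every $k\ge1$.

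For deletion we use $\alpha^{-1}$, which is again a $D$-permutation of finite order. Applying it to $c_2\transword{\bvarphi}c_f$ gives $\alpha^{-1}(c_2)=c_1\transword{\alpha^{-1}(\bvarphi)}\alpha^{-1}(c_f)$. Composing with $c_0\transword{\btau}c_1$ yields an accepting run on $\btau\alpha^{-1}(\bvarphi)$, which is~\eqref{eq: zero}.

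No step is a real obstacle. The only points to check are that $\alpha^{j}$ and $\alpha^{-1}$ are themselves $D$-permutations, so equivariance applies to them, and that the action on configurations fixes states, so acceptance is preserved.
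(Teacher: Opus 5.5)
Your proof is correct and follows the paper's argument: equivariance gives the segments $\alpha^j(c_1)\transword{\alpha^j(\bupsilon)}\alpha^{j+1}(c_1)$, which are chained and then closed off with the $\alpha^k$-image of the suffix run, and the $\alpha^{-1}$-image of the suffix run handles removal. Your remarks that powers and the inverse of $\alpha$ are again $D$-permutations, and that the action fixes states, spell out details the paper leaves implicit.
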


\begin{proof}
By equivariance~(\Cref{p: permutation word fma}),
\[
\alpha^j(c_1)
\transword{\alpha^j(\bupsilon)}
\alpha^{j+1}(c_1)
\qquad (j\geq 0).
\]
These runs may be concatenated and followed by the image under
\(\alpha^k\) of the suffix run.  Since permutations preserve states,
the resulting final configuration is accepting.  Applying
\(\alpha^{-1}\) to the suffix run gives the case in which the loop is
removed.
\end{proof}

The main point is therefore to bound the order of a permutation
mapping one register assignment to another.
For that, we shall need the following
result on special integer partitions.

\begin{lemma}
\label{l: closure}
Let $t,p\geq 0$ and $c_1,c_2,\ldots,c_p,a_1,a_2,\ldots,a_t$
be positive integers such that
\[
c_1+\cdots+c_p+a_1+\cdots+a_t=r\, .
\]
Then there are positive integers $b_1,b_2,\ldots,b_t$ with \(b_j>a_j\) for $j=1,2,\ldots,t$ with
\[
\operatorname{lcm}
(c_1,\ldots,c_p,b_1,\ldots,b_t)
\leq \max\{g(r),r+1\}.
\]
\end{lemma}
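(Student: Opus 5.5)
The plan is a direct construction of the $b_j$, split according to $t$. Throughout we use the elementary monotonicity fact that if positive integers $\lambda_1,\ldots,\lambda_k$ satisfy $\lambda_1+\cdots+\lambda_k\leq r$, then $\operatorname{lcm}(\lambda_1,\ldots,\lambda_k)\leq g(r)$. Indeed, we may pad the sequence with parts equal to $1$ until it sums to exactly $r$. This does not change the lcm, and the resulting partition of $r$ is one of those over which $g(r)$ is maximised.

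\emph{Case $t=0$.} The $c_i$ form a partition of $r$, so $\operatorname{lcm}(c_1,\ldots,c_p)\leq g(r)$ and there is nothing to choose.

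\emph{Case $t\geq 2$.} Set $A=a_1+\cdots+a_t$ and choose $b_1=\cdots=b_t=A$. Since $t\geq 2$ and all $a_j$ are positive, $A>a_j$ for every $j$. Moreover
\[
\operatorname{lcm}(c_1,\ldots,c_p,b_1,\ldots,b_t)=\operatorname{lcm}(c_1,\ldots,c_p,A),
\]
and $c_1+\cdots+c_p+A=r$. By the monotonicity fact the lcm is at most $g(r)$.

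\emph{Case $t=1$.} This is the only case that needs care, because a single part $a_1$ must be replaced by something strictly larger without overshooting the budget $r$.

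If $p=0$, then $a_1=r$. Take $b_1=r+1$; the lcm is $r+1\leq\max\{g(r),r+1\}$. This subcase is exactly why the bound contains $r+1$.

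If $p\geq 1$, absorb the excess into an existing part. Pick any index $k$ and let $m$ be the least positive integer with $mc_k>a_1$, and set $b_1=mc_k$. By minimality $(m-1)c_k\leq a_1$, so $b_1\leq a_1+c_k$. Since $c_k\mid b_1$, we get
\[
\operatorname{lcm}(c_1,\ldots,c_p,b_1)=\operatorname{lcm}(\{c_i\}_{i\neq k},\,b_1).
\]
The parts on the right sum to at most $(r-a_1-c_k)+(a_1+c_k)=r$. By the monotonicity fact the lcm is at most $g(r)$.

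The main point to get right is the $t=1$, $p\geq1$ subcase. The obvious choice $b_1=a_1+1$ does not obviously keep the lcm below $g(r)$. The trick of taking $b_1$ to be a multiple of some $c_k$, so that $c_k$ disappears from the lcm and its size is freed for $b_1$, resolves this. Collecting the cases gives the stated bound $\max\{g(r),r+1\}$.
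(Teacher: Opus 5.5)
Your proof is correct, and its key step (taking $b$ to be the least multiple of some $c_k$ exceeding the path length, so that $c_k$ drops out of the lcm while the parts still sum to at most $r$) is exactly the paper's argument for $p,t>0$. The paper applies this step uniformly to all $t\geq 1$ with $B=\max_j a_j$, and uses $b_j=B+1$ when $p=0$. You instead handle $t\geq 2$ more simply by setting every $b_j=\sum_i a_i$, which keeps the bound at $g(r)$ and makes explicit that the $r+1$ term is needed only when $p=0$ and $t=1$.
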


\begin{proof}
If \(t=0\), the displayed least common multiple is at most \(g(r)\).
If \(p=0\), let \(B=\max_j a_j\) and choose every \(b_j=B+1\).
The resulting least common multiple is \(B+1\leq r+1\).

Assume now that \(p,t>0\), and again let \(B=\max_j a_j\).  Let \(b\)
be the smallest multiple of \(c_1\) larger than \(B\).  Then
\[
B<b\leq B+c_1.
\]
Choose every \(b_j=b\).  Since \(c_1\mid b\),
\[
\operatorname{lcm}(c_1,\ldots,c_p,b)
=
\operatorname{lcm}(c_2,\ldots,c_p,b).
\]
Moreover,
\[
b+c_2+\cdots+c_p
\leq B+c_1+\cdots+c_p
\leq r.
\]
Thus \(b,c_2,\ldots,c_p\), together with sufficiently many \(1\)'s,
form a partition of \(r\).  Hence their least common multiple is at
most \(g(r)\).
\end{proof}

\begin{lemma}
\label{l: perm r}
For any
\[
\btheta,\btheta'\in(\Sigma\setminus D)^{r_{\neq}},
\]
there is a \(D\)-permutation \(\alpha\) such that
\[
\alpha(\btheta)=\btheta'
\qquad\text{and}\qquad
\operatorname{ord}(\alpha)
\leq \max\{g(r),r+1\}.
\]
\end{lemma}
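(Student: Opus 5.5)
We will build \(\alpha\) from the partial injection \(\beta\colon\theta_i\mapsto\theta'_i\) (\(i=1,\ldots,r\)). It is well defined and injective because both tuples have pairwise distinct entries. It fixes no constant, since all entries lie in \(\Sigma\setminus D\). So \(\alpha\) will be a \(D\)-permutation as soon as it is the identity on \(D\) and moves only letters of \([\btheta]\cup[\btheta']\) together with finitely many fresh non-constants. Since \(\Sigma\) is infinite, such fresh letters outside \(D\cup[\btheta]\cup[\btheta']\) are always available.

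First, we decompose \(\beta\) as a partial map on \(X=[\btheta]\cup[\btheta']\) into maximal orbits, exactly as in the functional graph of a partial injection. Each component is one of two kinds. It is either a \emph{cycle} \(\theta_{i_1}\mapsto\theta_{i_2}\mapsto\cdots\mapsto\theta_{i_1}\), lying entirely inside \([\btheta]\cap[\btheta']\), with length \(c_j\). Or it is a \emph{path} \(\theta_{i_1}\mapsto\cdots\mapsto\theta'_{\ast}\), starting at a letter of \([\btheta]\setminus[\btheta']\) and ending at a letter of \([\btheta']\setminus[\btheta]\). Such a path contains \(a_j\geq 1\) letters of \([\btheta]\) (the number of arrows), and hence \(a_j+1\) letters in total. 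Every register index \(i\) contributes exactly one arrow, so \(c_1+\cdots+c_p+a_1+\cdots+a_t=r\). This is the bookkeeping step, and it deserves care: the path lengths must be counted by arrows, not by vertices, so that the sum is exactly \(r\).

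Next, we apply \Cref{l: closure} to get integers \(b_j>a_j\) with \(\operatorname{lcm}(c_1,\ldots,c_p,b_1,\ldots,b_t)\leq\max\{g(r),r+1\}\). Each path has \(a_j+1\leq b_j\) vertices. We close it into a cycle of length exactly \(b_j\) by inserting \(b_j-a_j-1\geq 0\) fresh letters after its endpoint and mapping the last fresh letter (or the endpoint itself, if none are needed) back to the start. Different paths use disjoint sets of fresh letters. We then define \(\alpha\) on all remaining letters as the identity.

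The result is a permutation consisting of disjoint cycles of lengths \(c_1,\ldots,c_p,b_1,\ldots,b_t\), plus fixed points, so its order is their lcm. By construction it extends \(\beta\), so \(\alpha(\btheta)=\btheta'\), and it fixes \(D\). The only point needing care is the counting in the first step, which ensures that the hypothesis of \Cref{l: closure} matches the component structure. The remaining steps are routine.
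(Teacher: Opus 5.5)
Your proposal is correct and follows essentially the same route as the paper: you decompose the graph of $\theta_i\mapsto\theta_i'$ into cycles and paths whose edge counts sum to $r$, close each path into a cycle of length $b_j>a_j$ using fresh letters, and bound the resulting $\operatorname{lcm}$ by \Cref{l: closure}. Your explicit bookkeeping (path lengths counted by arrows, fresh letters chosen outside $D\cup[\btheta]\cup[\btheta']$ and disjoint across paths) simply spells out details the paper leaves implicit.
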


\begin{proof}
Form the directed graph with vertices $[\btheta]\cup [\btheta^\prime]$ and edges
\[
\theta_i\longrightarrow\theta_i',
\qquad i=1,\ldots,r.
\]
Every vertex has indegree and outdegree at most one, so each component
is either a directed cycle or a directed path.
Let \(c_1,\ldots,c_p\) be the cycle lengths and
\(a_1,\ldots,a_t\) the path lengths.  Since the graph has \(r\) edges,
\[
c_1+\cdots+c_p+a_1+\cdots+a_t=r.
\]
A path of length \(a_j\) can be completed, using fresh letters, to a
cycle of any length \(b_j>a_j\).  The resulting cycles define a
\(D\)-permutation mapping every \(\theta_i\) to \(\theta_i'\), and its
order is
\[
\operatorname{lcm}(c_1,\ldots,c_p,b_1,\ldots,b_t).
\]
The result follows from \Cref{l: closure}.
\end{proof}

Finally, we compare \(g(r)\) with \(r+1\).

\begin{lemma}
\label{l: gr r2}
For every \(r\geq 7\),
\[
g(r)\geq r+1.
\]
\end{lemma}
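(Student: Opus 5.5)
We need $g(r)\ge r+1$ for $r\ge 7$. The plan is to exhibit, for each $r\ge 7$, an explicit partition of $r$ whose parts have least common multiple at least $r+1$. Since $g(r)$ is the maximum of such least common multiples over all partitions of $r$, this gives the claim directly.

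The partition is built from two consecutive integers. Write $r=a+(a+1)+e$ with $a=\lfloor (r-1)/2\rfloor$ and $e\in\{0,1\}$, padding with a part $1$ when $e=1$. Because $\gcd(a,a+1)=1$, the least common multiple of the parts is $a(a+1)$. It then remains to verify
\[
a(a+1)\ge r+1 .
\]
There are two cases according to the parity of $r$.
\begin{itemize}
\item If $r=2a+1$ is odd, the condition reads $a(a+1)\ge 2a+2$, that is $(a+1)(a-2)\ge 0$. This holds once $a\ge 2$, i.e.\ $r\ge 5$.
\item If $r=2a+2$ is even, the condition reads $a^2+a\ge 2a+3$, that is $a^2-a-3\ge 0$. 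This holds for $a\ge 3$, i.e.\ $r\ge 8$.
\end{itemize}
So every $r\ge 7$ is covered: $r=7$ falls in the odd case with $a=3$, and $r\ge 8$ falls in one of the two cases.

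For reassurance, the smallest instances can be checked by hand. For $r=7$, the partition $3+4$ gives $12\ge 8$. For $r=8$, the partition $3+4+1$ gives $12\ge 9$. For $r=9$, the partition $4+5$ gives $20\ge 10$. Alternatively, one could use a cleaner bound such as $2\cdot 3\cdot(r-5)$ via the partition $2+3+(r-5)$, but that needs a coprimality check on $r-5$. The consecutive-integers construction avoids that, so it is the one to use.

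The only real obstacle is small-case bookkeeping. The even case $r=6$ indeed fails: there $a=2$, giving $6<7$, and in fact $g(6)=6<7$. This explains the threshold $r\ge 7$ in \Cref{t: pl one}, so the argument must make sure the case split handles exactly $r\ge 7$.
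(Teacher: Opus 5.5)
Your proof is correct and follows the paper's strategy: exhibit an explicit partition of \(r\), split by parity, and for odd \(r\) you use the same partition \(m+(m+1)\). The only difference is the even case, where you take \(a+(a+1)+1\) uniformly for all even \(r\geq 8\). The paper instead uses \((m-1)+(m+1)\) together with the estimate \(\gcd(m-1,m+1)\leq 2\) for \(r\geq 10\), and handles \(r=8\) separately via \(3+5\); your variant is slightly simpler because it needs neither the gcd bound nor the special case.
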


\begin{proof}
If \(r=2m+1\), then \(m\geq 3\), and the partition
\[
r=m+(m+1)
\]
gives
\[
g(r)\geq \operatorname{lcm}(m,m+1)=m(m+1)\geq 2m+2=r+1.
\]

If \(r=8\), the partition \(8=3+5\) gives \(g(8)\geq 15>9\).

Finally, let \(r=2m\geq 10\).  Since
\[
2m=(m-1)+(m+1)
\]
and
\[
\gcd(m-1,m+1)\leq 2,
\]
we have
\[
g(2m)
\geq \operatorname{lcm}(m-1,m+1)
\geq \frac{m^2-1}{2}
\geq 2m+1.
\]
\end{proof}

\begin{proof}[Proof of \Cref{t: pl one}]
In every sufficiently long accepting run, two of the first \(n+1\)
configurations have the same state.  Let their register assignments be
\(\btheta\) and \(\btheta'\).  By \Cref{l: perm r}, there is a
\(D\)-permutation \(\alpha\) mapping the first configuration to the
second and satisfying
\[
\operatorname{ord}(\alpha)
\leq \max\{g(r),r+1\}.
\]
The pumping property follows from \Cref{l: pump}.  Thus \(L(\bA)\) is
\[
\bigl(n,\max\{g(r),r+1\}\bigr)\text{-pumpable}.
\]

For \(1\leq r\leq 6\),
\[
g(1),\ldots,g(6)=1,2,3,4,6,6,
\]
so \(\max\{g(r),r+1\}=r+1\).  For \(r\geq 7\),
\Cref{l: gr r2} gives
\(\max\{g(r),r+1\}=g(r)\).
\end{proof}

\Cref{t: pl one}
provides upper bounds for the order of permutations
in terms of descriptional complexity of the number
of states and registers.
As a matter of fact, these bounds are optimal in the worst-case.

\begin{example}
\label{e: lpi}
Let $\pi:\{1,2,\ldots,r\}\to\{1,2,\ldots,r\}$
be a permutation of order $g(r)$
and let ${\tau_1,\tau_2,\ldots,\tau_r}$
be distinct symbols from ${\Sigma\setminus\{\#\}}$.
Consider the language
\[
L_\pi=\{\#\bsigma_0\#\bsigma_1\#\cdots\#\bsigma_k:
k\geq 0
\mbox{ and }
\bsigma_i=\tau_{\pi^i(1)}\tau_{\pi^i(2)}\cdots\tau_{\pi^i(r)}
\mbox{ for }i=0,1,2,\ldots,k\}\, .
\]
It is accepted by a finite-memory automaton
with $n=r+1$ states and $r$ registers.
We contend that for every $d<g(r)$,
$L_\pi$ is not $(n,d)$-pumpable.
Assume, to the contrary, that it is and
consider the word $\#\tau_1\tau_2\cdots\tau_r\#$
of length $r+2>n$.
In any pumping by means of a finite-order permutation $\alpha$,
the subword being pumped must be
$\#\tau_1\tau_2\cdots\tau_r$ or $\tau_1\tau_2\cdots\tau_r\#$.
Therefore, $\alpha$ must agree with $\pi$
on $\tau_1,\tau_2,\ldots,\tau_r$,
i.e., $\alpha(\tau_i)=\tau_{\pi(i)}$.
By the definition of $\pi$,
the order of $\alpha$ is at least $g(r)$.
\end{example}

\subsection{Input lengths with small permutation orders}

By allowing a larger pumping length,
we can obtain permutations of smaller order.
In fact, we show that for sufficiently long words,
we can always find a permutation of order at most $2$.

\begin{theorem}
\label{t: pl two}
For every $r\geq 1$,
every language accepted by a finite-memory automaton
with $n$ states and $r$ registers is $(3(r^2!)n,2)$-pumpable.
\end{theorem}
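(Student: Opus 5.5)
The plan is to reduce the statement to a combinatorial claim about register assignments, in the same way as the proof of \Cref{t: pl one}. Given an accepting run on $\bsigma$ with $\size{\bsigma}\geq 3(r^2!)n$, consider its first $3(r^2!)n+1$ configurations. By pigeonhole, some state $s$ occurs at least $m:=3(r^2!)+1$ times among them, say at times $t_0<t_1<\cdots<t_{m-1}$, with register assignments $\btheta^{(0)},\ldots,\btheta^{(m-1)}$. By \Cref{l: pump}, it suffices to find indices $a<b$ and a $D$-permutation $\alpha$ with $\operatorname{ord}(\alpha)\leq 2$ and $\alpha(\btheta^{(a)})=\btheta^{(b)}$. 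Then $\btau$ is the prefix read before time $t_a$, $\bupsilon$ is the segment read between $t_a$ and $t_b$, and $\size{\btau\bupsilon}\leq 3(r^2!)n$.

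First I characterize when such an $\alpha$ exists. Use the graph of \Cref{l: perm r}, with edges $\theta_i\to\theta'_i$. An order-$\leq 2$ permutation extending it exists if and only if every component is a fixed point, a $2$-cycle, or a single edge (a path of length $1$, closed up to the $2$-cycle $\theta_i\leftrightarrow\theta'_i$ using only its own endpoints). Equivalently, the condition is the following:
\[
\text{whenever } \theta'_i=\theta_j, \text{ we have } \theta_i=\theta'_j .
\]
So define the \emph{type} of a pair $(a,b)$ as the partial injection $f_{a,b}$ on $\{1,\ldots,r\}$ given by $f_{a,b}(i)=j$ iff $\theta^{(b)}_i=\theta^{(a)}_j$. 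Then $(a,b)$ works exactly when $f_{a,b}$ is an involution on its domain, i.e.\ $f_{a,b}(i)=j$ implies $f_{a,b}(j)=i$. Such a partial injection is a set of at most $r$ pairs in $\{1,\ldots,r\}^2$. Hence the number of types is at most $\sum_k\binom{r}{k}^2k!$, and I expect a crude estimate to bound this by $r^2!$.

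The core step is to find a pair of involutive type among $3(r^2!)+1$ assignments. My first attempt is to colour each $k\geq 1$ by $f_{0,k}$. Pigeonhole then gives three indices $b<c<e$ with $f_{0,b}=f_{0,c}=f_{0,e}$; the factor $3$ in the bound suggests exactly this. The intended argument would then proceed as follows.

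Letters of $\btheta^{(0)}$ that survive into $\btheta^{(b)},\btheta^{(c)},\btheta^{(e)}$ occupy the same register positions in all three. Between these three assignments they therefore contribute only fixed points of $f$. Letters absent from $\btheta^{(0)}$ may create non-involutive matches, such as a chain $i\mapsto j\mapsto k$. Among the three pairs $(b,c)$, $(c,e)$, $(b,e)$, I would try to show that one has involutive type. To do so, I would examine how a non-fixed match in $f_{b,c}$ and one in $f_{c,e}$ compose, using $f_{b,e}\supseteq f_{c,e}\circ f_{b,c}$. If needed, I would refine the colouring to record also the equality pattern inside a pair, so that pigeonhole gives a genuinely homogeneous triple.

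This combinatorial step is the main obstacle. A bad chain cannot persist in all three pairs simultaneously, but making that precise, and keeping the count of colours within $r^2!$, is where the real work lies. If the triple argument fails, I would fall back on choosing $b,c$ with equal $f_{0,\cdot}$ type and equal internal type. Then $\btheta^{(c)}$ is the image of $\btheta^{(b)}$ under a permutation that fixes the surviving old letters. I would then argue that, for the remaining positions, the pattern can be made a disjoint union of single edges by passing to the third index.
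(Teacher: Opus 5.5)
\paragraph{Assessment.}
Your reduction is correct: pigeonhole on the first $3(r^2!)n+1$ configurations, then \Cref{l: pump}. Your criterion is also correct: an order-$\leq 2$ $D$-permutation maps $\btheta^{(a)}$ to $\btheta^{(b)}$ iff $f_{a,b}$ is an involution on its domain. It is even sharper than the sufficient condition the paper uses, namely that every letter common to both assignments sits in the same register.

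\paragraph{The gap.}
The combinatorial core is not proved, and the route you sketch cannot work. Colouring each index $k$ by $f_{0,k}$ only colours the edges of a single star. Equal colours say nothing about how $\btheta^{(b)},\btheta^{(c)},\btheta^{(e)}$ relate to one another. For $r=2$, take
$\btheta^{(0)}=(p,q)$, $\btheta^{(b)}=(x,y)$, $\btheta^{(c)}=(w,x)$, $\btheta^{(e)}=(y,w)$
with $p,q,x,y,w$ distinct. All three $f_{0,\cdot}$ are empty. Yet $f_{b,c}$ and $f_{c,e}$ each consist of the single arrow $2\mapsto 1$, and $f_{b,e}$ of the single arrow $1\mapsto 2$. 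So no pair among the three is involutive. Inside a class of equal $f_{0,\cdot}$-type you are simply back to the original problem on fewer assignments. Hence no refinement of a colouring of \emph{indices} helps. The fallback paragraph is a hope rather than an argument.

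\paragraph{The missing idea.}
Colour \emph{all pairs} and apply the multicolour Ramsey theorem for triangles. By Greenwood--Gleason, every $k$-colouring of the edges of $K_N$ with $N>3\,k!$ has a monochromatic triangle. That is where the factor $3$ comes from, not from a pigeonhole for three indices.

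If a pair $a<b$ fails, the non-involutive $f_{a,b}$ has an arrow $f_{a,b}(t)=\ell$ with $t\neq\ell$, i.e.\ $\theta^{(a)}_\ell=\theta^{(b)}_t$. Colour $\{a,b\}$ by one such witness $(\ell,t)$. There are only $r(r-1)<r^2$ colours. So if all pairs among $N>3(r^2!)$ assignments failed, there would be a monochromatic triangle $a<b<c$ of colour $(\ell,t)$. This gives
$\theta^{(b)}_t=\theta^{(a)}_\ell=\theta^{(c)}_t=\theta^{(b)}_\ell$ with $\ell\neq t$,
contradicting that registers hold distinct letters.

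Colouring by the whole type $f_{a,b}$ would also yield a contradiction on a monochromatic triangle. However, it uses far more than $r^2$ colours, so it would not give the stated bound $3(r^2!)n$.
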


The proof of~\Cref{t: pl two}
uses the fact that an accepting run over a word
of length at least $3(r^2!)n$ contains
more than $3(r^2!)$ configurations with the same state.
The following lemma shows that among any such sequence
of register assignments, two can be mapped to one another.

\begin{lemma}
\label{l: ramsey}
Let $\btheta_1,\ldots,\btheta_N$
be a sequence of elements of
$(\Sigma\setminus D)^{r_{\neq}}$.
If $N>3 \left( r^2!\right) $,
there are non-negative integers $1\leq i<j\leq N$
and a $D$-permutation $\alpha$
of order at most $2$ such that
$\alpha(\btheta_i)=\btheta_j$.
\end{lemma}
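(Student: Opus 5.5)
The plan is a Ramsey argument with $r^2$ colours. The bound $3(r^2!)$ is exactly of the form of the standard estimate $R_c(3)\leq \lfloor e\cdot c!\rfloor+1\leq 3\,c!$ for the multicolour Ramsey number of triangles with $c=r^2$ colours. So the aim is to colour every pair $i<j$ with one of at most $r^2$ colours, one of them ``good'', in such a way that a monochromatic triangle in any other colour is impossible.

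First I characterise when an involution exists. Given $\btheta,\btheta'\in(\Sigma\setminus D)^{r_{\neq}}$, I form the directed graph of the proof of \Cref{l: perm r}, with edges $\theta_k\to\theta'_k$. I claim that a $D$-permutation $\alpha$ of order at most $2$ with $\alpha(\btheta)=\btheta'$ exists if and only if the following holds: for all $k\neq l$, if $\theta'_k=\theta_l$ then $\theta'_l=\theta_k$.

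For necessity, note that if $\alpha(\theta_k)=\theta_l$, then $\theta'_l=\alpha(\theta_l)=\alpha^2(\theta_k)=\theta_k$. For sufficiency, the condition says every component of the graph is one of the following:
\begin{itemize}
\item a loop $\theta_k=\theta'_k$;
\item a $2$-cycle $\theta_k\leftrightarrow\theta_l$;
\item a single edge $\theta_k\to\theta'_k$ with $\theta'_k\notin[\btheta]$ and $\theta_k\notin[\btheta']$.
\end{itemize}
In the last case I simply declare the transposition $(\theta_k\ \theta'_k)$. All these transpositions are disjoint and avoid $D$, and fixing everything else gives an involution. This is a routine check of in- and out-degrees.

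Next, the colouring. I colour a pair $i<j$ \emph{good} if the condition above holds for $(\btheta_i,\btheta_j)$. Otherwise I pick a witnessing ordered pair $(k,l)$ with $k\neq l$, $\theta_{j,k}=\theta_{i,l}$ and $\theta_{j,l}\neq\theta_{i,k}$, and use $(k,l)$ as the colour. A witness with $k=l$ is impossible, since $\theta_{j,k}=\theta_{i,k}$ already satisfies the implication. So there are at most $r(r-1)+1\leq r^2$ colours. Since $N>3(r^2!)\geq R_{r^2}(3)-1$, there is a monochromatic triangle $i<j<m$.

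The key step is to rule out a triangle whose colour is a bad colour $(k,l)$. In that case $\theta_{j,k}=\theta_{i,l}$, $\theta_{m,k}=\theta_{j,l}$ and $\theta_{m,k}=\theta_{i,l}$. Combining these gives $\theta_{j,l}=\theta_{m,k}=\theta_{i,l}=\theta_{j,k}$, which contradicts the fact that $\btheta_j$ has pairwise distinct entries because $k\neq l$. Hence the triangle is good, and in particular the pair $(i,j)$ admits an involution $\alpha$ with $\alpha(\btheta_i)=\btheta_j$.

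The main obstacle I expect is bookkeeping rather than substance. I must make sure the colour count really is at most $r^2$, including the case $r=1$, where no bad colour exists and one colour suffices. I must also quote the Ramsey bound $R_c(3)\leq 3\,c!$ in the exact form needed, for which the classical $R_c(3)\leq e\,c!+1$ suffices.
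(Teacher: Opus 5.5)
Your proof is correct and follows essentially the same route as the paper. Both arguments colour each bad pair $i<j$ by an ordered pair of distinct register indices witnessing a cross-equality between $\btheta_i$ and $\btheta_j$. Both then apply the multicolour Ramsey bound for triangles with at most $r^2$ colours. In both, a monochromatic triangle forces a repeated letter in the middle assignment $\btheta_j$.

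The differences are cosmetic. The paper uses the simpler sufficient condition that no $\theta_{i,\ell}$ lies in $[\btheta_j]\setminus\{\theta_{j,\ell}\}$, and argues by contradiction with $r(r-1)$ colours. You instead use an exact characterisation of when an involution exists, plus an extra ``good'' colour. Your triangle argument uses only the equality part of the witness, so the sharper characterisation is not actually needed.
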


\begin{proof}
First, we show
a sufficient condition for a permutation of order $2$
to map one register assignment to another.
Let $\btheta,\btheta^\prime\in (\Sigma\setminus D)^{r_{\neq}}$.
If $\theta_\ell\notin [\btheta^\prime]\setminus \{\theta_\ell^\prime\}$
for all $\ell=1,2,\ldots,r$,
then the permutation $\alpha$,
defined by $\alpha(\theta_\ell)=\theta^\prime_\ell$ and
$\alpha(\theta_\ell^\prime)=\theta_\ell$ for $\ell=1,2,\ldots,r$,
is a $D$-permutation
of order at most two.

We contend that there are $1\leq i<j\leq N$,
such that $\btheta_i$ and $\btheta_j$
satisfy this condition.
Assume to the contrary,
that for every $1\leq i<j\leq N$,
$\btheta_i,\btheta_j$ do not satisfy the mentioned condition.
Let $\ell_{i,j}\neq t_{i,j}$ be such that
$\theta_{i,\ell_{i,j}}=\theta_{j,t_{i,j}}$.
Let $X=\{(\ell,t):\ell,t\in\{1,2,\ldots,r\},\ell\neq t\}$.
Note that $|X|=r(r-1)< r^2$.

Consider the complete graph of $N$ vertices $\{ v_1,v_2,\ldots,v_N \}$,
with coloring of edge $\{ v_i,v_j \}$
with the color $(\ell_{i,j},t_{i,j})\in X$.
By the \emph{multicolor Ramsey theorem}~\cite[Theorem 6]{Greenwood55},
for $N> 3 (|X|!)$, there is a monochromatic triangle in the graph.
Let $v_i$, $v_j$, and $v_k$, $i<j<k$, be the vertices
of that triangle.
Let $(\ell,t)\in X$ be the color of the edges $(v_i,v_j),(v_i,v_k)$,
and $(v_j,v_k)$.
By the definition of $\ell$ and $t$,
$\theta_{i,\ell}=\theta_{j,t}$,
$\theta_{i,\ell}=\theta_{k,t}$,
and
$\theta_{j,\ell}=\theta_{k,t}$.
In particular,
$\theta_{j,\ell}=\theta_{j,t}$,
but $\ell\neq t$.
This contradiction proves our contention.
\end{proof}

\subsection{Minimal pumping constants}

Unfortunately, trying to optimize and perhaps
compute the minimal constants for a given
finite-memory automaton is impossible
due to the following theorem.

\begin{theorem}
\label{t: ud prob}
The following decision problems are undecidable for finite-memory
automata.
\begin{itemize}
\item For given $p,d$ and $\textsc{FMA}$ $\bA$, is $L(\bA)$ $(p,d)$-pumpable?
\item For a given $p$ and $\textsc{FMA}$ $\bA$, is
$L(\bA)$ $(p,d)$-pumpable for some $d$?
\item For $\textsc{FMA}$ $\bA$, does the classical pumping lemma hold for $L(\bA)$?
\end{itemize}
\end{theorem}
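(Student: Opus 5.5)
We plan to reduce from the universality problem for finite-memory automata, which is known to be undecidable for automata with at least two registers (Neven, Schwentick, Vianu). Given an FMA $\bB$ over $\Sigma$, we build an FMA $\bA$ whose language is pumpable, for the relevant constants, exactly when $L(\bB)=\Sigma^\ast$. The idea is to combine $L(\bB)$ with a fixed ``witness'' language that resists pumping, in such a way that the witness only matters when some word is missing from $L(\bB)$.

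Concretely, fix a fresh distinguished letter $\#$ (added to $D$). We take
\[
L=\{\bsigma\#\bvarphi:\ \bsigma\in\Sigma^\ast,\ \bvarphi\in\Sigma^\ast\}\ \text{restricted as follows:}\ \ \bsigma\#\bvarphi\in L \iff \bsigma\in L(\bB)\ \text{or}\ \bvarphi\in L_{\text{no-pl}},
\]
together with all words containing zero or at least two occurrences of $\#$. This language is accepted by an FMA, because FMA languages are closed under union and under concatenation with a marker when the registers are reinitialised by guessing; alternatively, we simulate $\bB$ and $\bC$ in separate branches. If $L(\bB)=\Sigma^\ast$, then $L$ is the regular language over the marker structure, i.e.\ all words. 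Any nonempty word then pumps with $p=1$, $d=1$ and $\bupsilon$ the first letter. If instead some $\bsigma_0\notin L(\bB)$, then words $\bsigma_0\#\bvarphi$ with long $\bvarphi\in L_{\text{no-pl}}$ lie in $L$, and we want to show they cannot be pumped.

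The main obstacle is this non-pumpability argument. A pump may fall inside $\bsigma_0$, since $|\btau\bupsilon|\le p$, and removing or repeating a factor there may land in $L(\bB)$, which makes the word trivially accepted. To neutralise this, we put the witness \emph{first}: use $\bvarphi\#\bsigma$ with $\bvarphi\in L_{\text{no-pl}}$, so that the first $p$ letters lie inside $\bvarphi$. Pumping $k=0$ or $k=2$ inside $\bvarphi$ keeps a single $\#$, and the prefix must stay in $L_{\text{no-pl}}$ unless $\alpha^k(\bsigma_0)\in L(\bB)$. To block the latter, we use the equivariance of $L(\bB)$ (\Cref{p: permutation word fma}): $\bsigma_0\notin L(\bB)$ implies $\alpha^k(\bsigma_0)\notin L(\bB)$ for every $D$-permutation. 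The same holds for $\alpha^{-1}$ in~\eqref{eq: zero}. Hence pumpability of $\bvarphi\#\bsigma_0$ reduces to the corresponding pumping of $\bvarphi$ within $L_{\text{no-pl}}$.

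For the first two items we also need that $L_{\text{no-pl}}$ is not $(p,d)$-pumpable for the chosen constants, e.g.\ $p=1,d=1$. That gives the first item, and the second with $p=1$ and arbitrary $d$. For the latter we need a witness not pumpable for any $d$ at that $p$; we can take a word starting with two equal letters forced by overlapping constraints, so that the first letter cannot be removed. For the third item we use that $L_{\text{no-pl}}$ fails the classical lemma for every $p$ (\Cref{e: three}). Checking that the suffix constraints of $L_{\text{no-pl}}$ survive (long overlapping chains beyond position $p$) is the step we would verify most carefully.
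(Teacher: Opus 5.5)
Your reduction has the same shape as the paper's. You reduce from universality of $\bB$, place a non-pumpable witness \emph{before} $\#$ so that $|\btau\bupsilon|\le p$ traps the pump inside it, put $L(\bB)$ after $\#$, and use $L_{\text{no-pl}}$ as the witness for the third item. For the first item with $d=1$, and for the third item, this is correct, because then $\alpha=\id$ and the part after $\#$ is untouched.

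The gap is in the second item, where arbitrary finite-order $D$-permutations must be handled. There you rely on the claim that $\bsigma_0\notin L(\bB)$ implies $\alpha^{j}(\bsigma_0)\notin L(\bB)$ for every $D$-permutation $\alpha$. This is false. \Cref{p: permutation word fma} maps the initial configuration $[s_0,\btheta_0]$ of $\bB$ to $[s_0,\alpha(\btheta_0)]$, so $L(\bB)$ is invariant only under permutations that also fix the initial register values. Those values lie outside $D$ by definition, so the pumping permutation is free to move them. For example, a one-register $\bB$ with initial value $\theta$ can accept exactly $\Sigma^\ast\setminus\{\theta\}$, which forces $\bsigma_0=\theta$. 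Take $\alpha$ to be the transposition of $\theta$ with a fresh letter $\rho$. Removing the first letter of $\sigma\sigma\#\theta$ as in \eqref{eq: zero} then yields $\sigma\#\rho\in L$, so your argument that ``the first letter cannot be removed'' breaks.

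The missing idea is the paper's: take $k=\operatorname{ord}(\alpha)$ in \eqref{eq: k fma}. Then $\alpha^{k}=\id$, the part after $\#$ is literally $\bsigma_0\notin L(\bB)$, and no invariance of $L(\bB)$ is needed. It only remains to show that the pumped prefix leaves the witness language. The paper makes this trivial by using $\Sigma^{p}\cdot\#\cdot\Sigma^\ast$ as the witness component: any $k\geq 1$ pushes the prefix beyond length $p$, which handles every $p$ and $d$ at once.

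Your witness $\sigma\sigma$ can be rescued the same way. For $k=\operatorname{ord}(\alpha)=d_0$ the prefix is $\sigma\alpha(\sigma)\cdots\alpha^{d_0-1}(\sigma)\sigma\sigma$. It has length $d_0+2\geq 3$ and ends in two equal letters. But every word of $L_{\text{no-pl}}$ of length greater than $2$ ends with the contents of the two registers at $s_1$, read in the order $x_2$ then $x_1$, and these are distinct. Alternatively, you could restrict the source problem to automata whose language is invariant under all $D$-permutations, but then you must argue that universality remains undecidable for that class.
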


\begin{proof}
We reduce the universality problem, which is known
to be undecidable for finite-memory 
automata~\cite[Theorem 5.1]{NevenSV04},
to our problem.
Let $\bB$ be a finite-memory automaton
for which we want to decide universality.
Fix $p$, let $\# \notin \Sigma$ and
consider the following language,
\begin{equation}
\label{eq: reduction}
L_{p,B}:=\Sigma^{p} \cdot \# \cdot \Sigma^\ast \ \cup \ \Sigma^\ast \cdot \#\cdot L(\bB)
\ \cup \ \#^\ast\cdot\Sigma^\ast \, .
\end{equation}
$L_{p,B}$ is a quasi-regular language, because quasi-regular languages
are closed under union and concatenation.\footnote{
This language is similar to the one in~\cite{GruberHR24}.}
Let $\bA$ be a finite-memory automaton
that accepts $L_{p,B}$.
Note that $\Sigma^{p}$ is not pumpable for input length $p$
and any permutation order $d$.

We distinguish between the cases of $L(\bB)=\Sigma^\ast$ and $L(\bB) \neq \Sigma^\ast$.
\begin{itemize}
\item If $L(\bB)=\Sigma^\ast$, then
$L(\bA)=\Sigma^\ast \cdot \#\cdot
\Sigma^\ast \ \cup \ \#^\ast\cdot\Sigma^\ast $,
that is $(1,1)$-pumpable and, therefore, is also $(p,d)$-pumpable.
\item If $L(\bB)\neq\Sigma^\ast$,
let $\bomega\notin L(\bB)$ and $\bsigma\in \Sigma^{p}$.
Since $\bsigma\#\bomega$ belongs to the first component of~\eqref{eq: reduction},
it is in $L(\bA)$.
We contend that $\bsigma\#\bomega$
is not $(p,d)$-pumpable with respect to $L_{p,B}$.
Assume, to the contrary, that there
is a decomposition $\btau\bupsilon\bvarphi$
of $\bsigma$ and a finite-order permutation $\alpha$
of order $d_0\leq d$
that satisfies the pumping conditions.
However, for $k =d_0$ in~\eqref{eq: k fma}
\[
\btau\bupsilon\alpha(\bupsilon)\cdots \alpha^{k-1}(\bupsilon)
\bupsilon \bvarphi \# \bomega
\]
is not in $L_{p,B}=L(\bA)$.
\end{itemize}

Therefore, $L(\bB)=\Sigma^\ast$ if and only if $L(\bA)$
is $(p,d)$-pumpable, implying that
the first problem is undecidable.

For the second problem, notice that for the same reduction
$L(\bB)=\Sigma^\ast$ if and only if $L(\bA)$
is $(p,d)$-pumpable for some $d$.

For the third problem, 
replace $\Sigma^{p}$ with the language $L_{\text{no-pl}}$
from~\Cref{e: three}, that is not pumpable without permutations.
Thus, $L(\bB)=\Sigma^\ast$ if and only if $L(\bA)$
is $(p,1)$-pumpable for some $p$.
\end{proof}

\section{Pumping lemma for one-register automata}
\label{s: one register}
In this section,
we show that, for $r = 1$,
there is a permutation of order one.
In other words, the classical pumping lemma 
holds for
one-register quasi-regular languages.

\begin{theorem}
\label{t: pl one fma}
Let $\bA$ be a one-register finite-memory automaton,
then $L(\bA)$ is $(11n^2+n,1)$-pumpable,
where $n$ is the number of states in $\bA$.
\end{theorem}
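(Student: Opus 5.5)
Fix an accepting run $c_0\trans{\sigma_1}c_1\cdots\trans{\sigma_m}c_m$ of $\bA$ on $\bsigma$ with $m\geq 11n^2+n$, where $c_i=\conf{s_i}{\theta_i}$ and $\theta_i\in\Sigma\setminus D$ is the content of the single register. By \Cref{l: pump} with $\alpha=\id$, it suffices to find $i<j\leq 11n^2+n$ with $c_i=c_j$, that is, $s_i=s_j$ and $\theta_i=\theta_j$. This is too strong in general, because the register value may change at every step, so the plan is to modify the run. The relevant fact is that, with one register, a permutation mapping $\theta_i$ to $\theta_j$ is just a transposition, and its action on the loop word is controlled by the letters of $\bupsilon$ that equal $\theta_i$ or $\theta_j$. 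The approach is therefore to find a loop $s_i=s_j$ that can be re-run with the register ending where it started, possibly after replacing the run on $\bupsilon$ by another run on the same word.

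First I would partition the prefix of length $11n^2+n$ into maximal \emph{blocks} in which the register value is constant. \textbf{Case 1:} some block has length at least $n$. Then two configurations inside it share both state and register value, and we are done with $p\leq$ (block start) $+\,n$. \textbf{Case 2:} the register changes many times. Then, looking at the positions where a new value is written, the sequence of distinct register values is long. I would classify each change point by the pair (state, whether the new value is fresh or equals some letter occurring later in the prefix).

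The key subcase I would aim for is a \emph{reload} segment $i<j$ with $s_i=s_j$ in which the register value $\theta_i$ is overwritten and is not compared to any later input letter before position $j$ via an ``$x_1=y$'' constraint, and the value $\theta_j$ is freshly loaded inside the segment. Applying the transposition $(\theta_i\,\theta_j)$ to the loop then leaves the loop word unchanged, provided that neither $\theta_i$ nor $\theta_j$ occurs in $\bupsilon$ outside the positions where it is read into the register. In that situation the swapped run reads exactly $\bupsilon$, so the transposition acts trivially on the word, and \Cref{l: pump} yields classical pumping. The occurrences of each register value in the word form an interval of tests, and I would bound how many blocks can be ``entangled'' with their neighbours, i.e.\ how many blocks can contain a letter equal to a value stored in another block. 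Among about $11n$ consecutive blocks, the aim is to obtain $n+1$ blocks that start in pairwise non-entangled fashion, with each starting state recorded. Pigeonhole on the $n$ states, applied twice (once for the start state and once for a witness state inside the block), is where the quadratic $11n^2$ would come from.

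The main obstacle is the entanglement analysis. A letter stored in the register at block $b$ may reappear in the input much later, either as a test against a later register value or because it is reloaded. Such occurrences would be moved by the transposition, which changes the pumped word. I would handle this with a counting argument: each block's value can be ``witnessed'' by at most a bounded number of other blocks inside a window, so a constant fraction of the blocks are clean. If clean blocks are not directly available, I would use the non-determinism of the run to replace the witnessed letter occurrence with a guess, which is legal in an FMA with guessing, and so rearrange the run. Pinning down the constant $11$ is routine bookkeeping once the clean-block lemma is established.
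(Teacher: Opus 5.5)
Your Case 1 is the paper's first case. Case 2, however, has a genuine gap: the transposition mechanism cannot give classical pumping.

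\textbf{Why the transposition route fails.} Via \Cref{l: pump}, $\alpha=(\theta_i\,\theta_j)$ yields the words $\btau\bupsilon\alpha(\bupsilon)\cdots\alpha^k(\bupsilon)\alpha^k(\bvarphi)$. So you need $\alpha$ to fix both $\bupsilon$ and $\bvarphi$, i.e.\ $\theta_i,\theta_j\notin[\bupsilon\bvarphi]$.
\begin{enumerate}
\item Your proviso still lets $\theta_j$ occur in $\bupsilon$ at the position where it is loaded, and then $\alpha(\bupsilon)\neq\bupsilon$. In a non-guessing automaton every new register value is read from the input. So whenever $\theta_i\neq\theta_j$, the letter $\theta_j$ occurs in $\bupsilon$, and your key subcase is empty.
\item The suffix is not controlled either. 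In $\bA_{\text{sq,n}}$, every newly stored letter is immediately tested by $x_1=y$ in the letters that follow.
\item The devices meant to bridge this do not work. The clean-block lemma is asserted without argument, and nothing bounds how often a stored letter reappears later in the input. You also cannot replace a letter occurrence by a guess, because the run must use the automaton's fixed rules on the fixed word $\btau\bupsilon^k\bvarphi$.
\end{enumerate}
Consequently the constant $11$ has no derivation in your plan.

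\textbf{The missing idea: no permutation is needed.} With one register, whether a rule applies to a step depends only on the states, the input letter, and the equality pattern of old register, input, and new register. After an update, the old value is forgotten. So you splice the run at two register-updating transitions $c_i\trans{\sigma_{i+1}}c_{i+1}$ and $c_j\trans{\sigma_{j+1}}c_{j+1}$ of the same type.
\begin{itemize}
\item For type (a) ($x_1\neq y=x_1'$), assume $s_i=s_j$, $\theta_i\neq\sigma_{j+1}$ and $\theta_j\neq\sigma_{i+1}$. Then $c_i\trans{\sigma_{j+1}}c_{j+1}$ is a legal step (pumping down), and $c_j\trans{\sigma_{i+1}}c_{i+1}$ is a legal step that can be repeated (pumping up). This is \Cref{l: diff registers}.
\item Types (b) and (c) work the same way, using a common target state instead.
\end{itemize}

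\textbf{The counting.} If some $n$ consecutive transitions all preserve the register, a configuration repeats; this is your Case 1. Otherwise the first $(11n+1)n$ transitions contain at least $11n+1$ updates. Hence there are at least $3n+1$ of type (a), $3n+1$ of type (b), or $5n+1$ of type (c). Pigeonhole on states gives $4$, $4$, or $6$ of them sharing the relevant state. \Cref{p: comp tri}, a short double-counting argument, then shows that two of these satisfy the cross-inequalities. This is where the quadratic bound and the constant $3+3+5=11$ come from.
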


For the proof of~\Cref{t: pl one fma},
we find sufficient conditions
for a word to be pumped without a permutation.
Note that if an accepting run repeats configurations,
this is immediate.

\begin{lemma}
\label{l: same cfg}
Let $\bc=c_0,c_1,\ldots,c_m$
be an accepting run
of $\bA$ on $\bsigma=\sigma_1\cdots\sigma_m$.
If there are integers $0\leq i<j\leq m$
with $c_i=c_j$,
then $\bsigma$ can be pumped for the decomposition
$\btau=\sigma_1\cdots\sigma_{i}\,$,
$\bupsilon=\sigma_{i+1}\cdots\sigma_j\,$,
and $\bvarphi=\sigma_{j+1}\cdots\sigma_m\,$.
\end{lemma}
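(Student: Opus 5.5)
This is the special case of \Cref{l: pump} in which $\alpha=\id$. The identity is trivially a $D$-permutation of order $1$, and it maps $c_i$ to $c_j$ because $c_i=c_j$. For completeness, I would spell out the argument directly rather than only citing that lemma.

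First, split the accepting run at positions $i$ and $j$. This gives three runs:
\[
c_0\transword{\btau}c_i,
\qquad
c_i\transword{\bupsilon}c_j=c_i,
\qquad
c_j\transword{\bvarphi}c_m .
\]
Here $c_0$ is the initial configuration and $c_m$ is accepting. Since $i<j$, we have $\size{\bupsilon}=j-i>0$.

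Next, I would note that the middle run starts and ends in the same configuration $c_i$. It can therefore be repeated any number $k\geq 1$ of times, giving a run $c_i\transword{\bupsilon^k}c_i$. Prefixing the $\btau$-run and appending the $\bvarphi$-run then yields an accepting run on $\btau\bupsilon^k\bvarphi$. This is exactly \eqref{eq: k fma} with $\alpha=\id$.

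For $k=0$, I would concatenate the $\btau$-run, which ends in $c_i=c_j$, directly with the $\bvarphi$-run. This gives an accepting run on $\btau\bvarphi=\btau\,\id^{-1}(\bvarphi)$, which is \eqref{eq: zero}.

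There is no real obstacle in this argument. The only point to check is that concatenating runs is legitimate: each transition depends only on the current configuration, the input letter and the next configuration. Consequently, gluing runs that agree on the shared endpoint configuration is again a run.

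The length condition $\size{\btau\bupsilon}=j\leq p$ is not part of this statement. It will be ensured later by choosing $j$ small when the lemma is applied.
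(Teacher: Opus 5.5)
Your proposal is correct and matches the paper, which gives no separate proof and simply calls this case immediate. Your argument is the $\alpha=\id$ instance of \Cref{l: pump}: repeat or delete the loop $c_i\transword{\bupsilon}c_j=c_i$. (Minor indexing remark: \eqref{eq: k fma} with index $k$ contains $k+1$ copies of $\bupsilon$, but since you cover every power $\bupsilon^k$ with $k\geq 0$, nothing is lost.)
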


However,
over an infinite alphabet,
there is no guarantee that there is an accepting run with
repeating configurations.
Nevertheless, for long computations,
either the register value does not change
(thus, due to the finite number of states,
there is a repetition of configurations),
or there are multiple transitions in which
the register changes its content.

\begin{definition}
\label{d: register}
A transition $\conf{s}{\sigma_1}\trans{\sigma_2}\conf{s^\prime}{\sigma_3}$ 
is called \emph{register-preserving} if $\sigma_1=\sigma_3$
and \emph{register-updating} otherwise.
\end{definition}

For register-updating transitions
there are three cases for equalities between
the symbols $\sigma_1,\sigma_2,\sigma_3$:
\begin{itemize}
\item[(a)] $\sigma_1\neq \sigma_2=\sigma_3 (\neq \sigma_1)$.
\item[(b)] $\sigma_1= \sigma_2\neq\sigma_3 (\neq\sigma_1)$.
\item[(c)] $\sigma_1\neq \sigma_2\neq\sigma_3 (\neq\sigma_1)$.
\end{itemize}
Similarly, a triple $(\sigma_1,\sigma_2,\sigma_3)\in\Sigma^3$
with $\sigma_1\neq\sigma_3$,
is classified into
its corresponding type, either (a), (b), or (c).

\begin{remark}
\label{r: tran guess}
Transitions of type (b) and type (c) are not possible
in non-guessing finite-memory automata.
\end{remark}

\begin{definition}
\label{d: comp}
Two register-updating transitions:
$\conf{s}{\sigma_1}\trans{\sigma_2}\conf{s^\prime}{\sigma_3}$
and
$\conf{t}{\tau_1}\trans{\tau_2}\conf{t^\prime}{\tau_3}$
are called \emph{compatible} if one of the following holds:
\begin{itemize}
\item[(a)] Both are of type (a), $s=t$,
$\sigma_1\neq \tau_2=\tau_3 (\neq\sigma_1)$,
and $\tau_1\neq\sigma_2=\sigma_3 (\neq \tau_1)$.
\item[(b)] Both are of type (b), $s^\prime=t^\prime$,
$\sigma_1=\sigma_2\neq \tau_3 (\neq \sigma_1)$, and
$\tau_1=\tau_2\neq \sigma_3 (\neq \tau_1)$.
\item[(c)] Both are of type (c), $s^\prime=t^\prime$,
$\sigma_1\neq\sigma_2\neq \tau_3 (\neq \sigma_1)$, and
$\tau_1\neq\tau_2\neq \sigma_3 (\neq \tau_1)$.
\end{itemize}

Similarly, two triples $(\sigma_1,\sigma_2,\sigma_3),(\tau_1,\tau_2,\tau_3)$
with $\sigma_1\neq\sigma_3$ and $\tau_1\neq \tau_3$
are called \emph{compatible}
if they have the same type and satisfy the same constraints as above (all except the state constraint).
\end{definition}

If an accepting run contains two compatible transitions,
then we have straightforward pumping.

\begin{lemma}
\label{l: diff registers}
Let $\bc=c_0,c_1,\ldots,c_m$
be an accepting run of $\bA$ on $\bsigma=\sigma_1\cdots\sigma_m$,
$c_i =\conf{s_i}{\theta_i}$, $i = 0,1,\ldots,m$.
If there are integers $0\leq i<j< m$
such that $c_i\trans{\sigma_{i+1}}c_{i+1}$ and $c_j\trans{\sigma_{j+1}}c_{j+1}$
are compatible,
then, $\bsigma$ can be pumped for the decomposition
$\btau=\sigma_1\cdots\sigma_{i}\,$,
$\bupsilon=\sigma_{i+1}\cdots\sigma_j\,$,
and $\bvarphi=\sigma_{j+1}\cdots\sigma_m\,$.
\end{lemma}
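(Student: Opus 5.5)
The plan is to construct, for every $k\geq 0$, an accepting run of $\bA$ on $\btau\bupsilon^k\bvarphi$ directly. Since $\alpha=\id$, Lemma~\ref{l: pump} does not apply, so I glue pieces of the given run $\bc$ at the two compatible transitions. The basic tool is a \emph{transplant} observation, which I prove first. For a one-register automaton, whether a rule $s\trans{\varphi}s'$ licenses $\conf{s}{\rho_1}\trans{\rho_2}\conf{s'}{\rho_3}$ depends only on two things: the equality pattern among $\rho_1,\rho_2,\rho_3$, and which of them equal constants in $D$. Register values never lie in $D$. Hence if a transition is replaced by one with the same states, the same input letter-status with respect to $D$, and the same equality type (a), (b) or (c), the same rule still fires. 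The compatibility conditions of Definition~\ref{d: comp} are precisely the equalities and disequalities needed for such transplants.

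For type (a) the argument is immediate. Write $c_i=\conf{s}{\theta_i}$ and $c_j=\conf{s}{\theta_j}$, with $c_{i+1}=\conf{s_{i+1}}{\sigma_{i+1}}$ and $c_{j+1}=\conf{s_{j+1}}{\sigma_{j+1}}$. Compatibility gives $\theta_j\neq\sigma_{i+1}$, so the rule used at step $i+1$ also yields $c_j\trans{\sigma_{i+1}}c_{i+1}$. Together with the original segment $c_{i+1}\transword{\sigma_{i+2}\cdots\sigma_j}c_j$, this gives a loop $c_j\transword{\bupsilon}c_j$. The run on $\btau\bupsilon^k\bvarphi$ for $k\geq1$ is therefore: the prefix of $\bc$ up to $c_j$, then $k-1$ extra copies of this loop, then the original suffix from $c_j$. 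For $k=0$, compatibility gives $\theta_i\neq\sigma_{j+1}$, so the rule used at step $j+1$ yields $c_i\trans{\sigma_{j+1}}c_{j+1}$. Following this with the original suffix gives an accepting run on $\btau\bvarphi$.

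Types (b) and (c) are the time-reversal of type (a). Here the shared state is the target $s_{i+1}=s_{j+1}$ rather than the source, and the new register value is a guess. The compatibility conditions $\theta_j=\sigma_{j+1}\neq\theta_{i+1}$ for (b), and $\theta_j\neq\sigma_{j+1}\neq\theta_{i+1}\neq\theta_j$ for (c), let the rule of step $j+1$ be rerouted to land in $c_{i+1}$. That is, $c_j\trans{\sigma_{j+1}}c_{i+1}$. Symmetrically, $c_i\trans{\sigma_{i+1}}c_{j+1}$. Both are the same kind of transplant as before, with the guessed target register swapped instead of the source register.

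The main obstacle is converting these rerouted transitions into runs on exactly the words $\btau\bupsilon^k\bvarphi$ for the stated cut, $\bupsilon=\sigma_{i+1}\cdots\sigma_j$. The rerouting above most naturally yields a loop on $\sigma_{i+2}\cdots\sigma_{j+1}$, which is not the stated $\bupsilon$. What I would try first is to also exploit the freedom of the guess at step $i+1$ itself. Because the target register there is unconstrained apart from the type, I would re-choose the guessed value on the re-entry into $c_j$. The aim is a transition $c_j\trans{\sigma_{i+1}}c'$ whose target $c'$ has state $s_{i+1}$ and from which the segment $\sigma_{i+2}\cdots\sigma_j$ can still be read back to $c_j$. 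Here I would use the fact that the intermediate run only tests equalities, so its register values can be renamed consistently while the letters $\sigma_{i+2}\cdots\sigma_j$ are kept fixed. Checking that the disequalities stated in Definition~\ref{d: comp} suffice for this renaming is the step I expect to require the most care. The $k=0$ case is handled analogously, starting the rerouted transition from $c_i$.
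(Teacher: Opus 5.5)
Your type (a) argument is correct, and it is exactly the paper's proof, which writes out only that case. The gap is in your last paragraph: the renaming step you flag cannot be completed, because for types (b) and (c) the lemma is false with the stated cut.

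Compatibility of these types identifies only the target states $s_{i+1}=s_{j+1}$. It says nothing about $s_i$ versus $s_j$. So nothing lets a run read $\sigma_{j+1}$ right after $\btau$, which is what $\btau\bvarphi$ requires. Nothing guarantees your transition $c_j\trans{\sigma_{i+1}}c'$ either. Renaming register values cannot help, because the input letters are fixed.

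Concretely, take states $p_0$ (initial), $p_1$, $q$ (accepting) and $r$, with $D=\emptyset$ and rules
$p_0\trans{y=x_1'}p_1$, $p_1\trans{x_1=y\neq x_1'}q$, $q\trans{x_1\neq y=x_1'}r$, $r\trans{x_1=y\neq x_1'}q$.
Its language is $\{a_1a_1a_2a_2\cdots a_ka_k : k\geq 1\}$. For $a\neq b$ and $g,g'\notin\{a,b\}$, the accepting run
\[
\conf{p_0}{\theta_0}\trans{a}\conf{p_1}{a}\trans{a}\conf{q}{g}\trans{b}\conf{r}{b}\trans{b}\conf{q}{g'}
\]
on $aabb$ has compatible type (b) transitions at $i=1$, $j=3$: the common target is $q$, and $a\neq g'$, $b\neq g$. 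Yet the stated cut $\btau=a$, $\bupsilon=ab$, $\bvarphi=b$ gives $\btau\bvarphi=ab\notin L$ and $\btau\bupsilon^2\bvarphi=aababb\notin L$. Replacing the two guessing rules by type (c) rules that read distinct constants $\#,\$$ gives the same failure on $a\#b\$$.

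The repair is to stop where your rerouting lands. You correctly derive $c_i\trans{\sigma_{i+1}}c_{j+1}$ and $c_j\trans{\sigma_{j+1}}c_{i+1}$ from \Cref{d: comp}. The first gives the removal run. The second gives the loop $c_{i+1}\transword{\sigma_{i+2}\cdots\sigma_{j+1}}c_{i+1}$. Hence the lemma holds for types (b) and (c) with the cut shifted by one:
$\btau=\sigma_1\cdots\sigma_{i+1}$, $\bupsilon=\sigma_{i+2}\cdots\sigma_{j+1}$, $\bvarphi=\sigma_{j+2}\cdots\sigma_m$.

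The paper's ``the other types are similar'' is valid only in this shifted form. The shift is harmless where the lemma is used. In \Cref{t: pl one fma} and \Cref{t: pl hfma}, the compatible pair lies among the first $p$ transitions, so $j+1\leq p$ and still $\size{\btau\bupsilon}\leq p$.
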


\begin{proof}
For the case of compatibility of type (a), we have:
\[
\conf{s_o}{\theta_0}\transword{\sigma_1\sigma_2\cdots \sigma_i}
\conf{s_i}{\theta_i}\trans{\sigma_{i+1}}\conf{s_{i+1}}{\theta_{i+1}}
\transword{\sigma_{i+2}\sigma_{i+3}\cdots \sigma_j}
\conf{s_j}{\theta_j}\trans{\sigma_{j+1}}\conf{s_{j+1}}{\theta_{j+1}}
\transword{\sigma_{j+2}\sigma_{j+3}\cdots \sigma_m}
\conf{s_m}{\theta_m}\, .
\]
That is, $s_i = s_j$, $\theta_i \neq \sigma_{j+1}=\theta_{j+1}(\neq \theta_i)$,
and
$\theta_j\neq \sigma_{i+1}=\theta_{i+1}(\neq \theta_j)$.
We obtain the following shortened run:
\[
\conf{s_o}{\theta_0}\transword{\sigma_1\sigma_2\cdots \sigma_i}
\conf{s_i}{\theta_i}=
\conf{s_j}{\theta_i}\trans{\sigma_{j+1}}\conf{s_{j+1}}{\theta_{j+1}}
\transword{\sigma_{j+2}\sigma_{j+3}\cdots \sigma_m}
\conf{s_m}{\theta_m}\, ,
\]
and the extended runs:
\[
\conf{s_o}{\theta_0}\transword{\sigma_1\sigma_2\cdots \sigma_j}
\conf{s_j}{\theta_j}=\conf{s_i}{\theta_j}
\trans{\sigma_{i+1}}\conf{s_{i+1}}{\theta_{i+1}}
\transword{\sigma_{i+2}\sigma_{i+3}\cdots \sigma_m}
\conf{s_m}{\theta_m}\, .
\]
The other types are similar.
\end{proof}

Finally, we show that in a constant
number of triples of symbols,
there is always a compatible pair.

\begin{proposition}
\label{p: comp tri}
Let $(\sigma_{1,i},\sigma_{2,i},\sigma_{3,i})_{i=1}^N$
be a sequence of triples of symbols all of the same type.
\begin{itemize}
\item If the type is (a) or (b) and $N\geq 4$, there is a compatible pair.
\item If the type is (c) and $N\geq 6$, there is a compatible pair.
\end{itemize}
\end{proposition}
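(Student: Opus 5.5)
The plan is to treat each triple as a labelled directed edge and reduce incompatibility to a local head-to-tail condition. Then I fix one triple of a supposedly pairwise incompatible family and apply pigeonhole to the others. Throughout I argue by contradiction: assume that all $N$ triples are pairwise incompatible.

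\emph{Types (a) and (b).} A type (a) triple has the form $(x,y,y)$ with $x\neq y$. By \Cref{d: comp}, two such triples $(x,y,y)$ and $(u,v,v)$ are incompatible if and only if $x=v$ or $u=y$. A type (b) triple $(x,x,z)$ gives the same condition with $z$ in place of $y$. So in both cases I view triple $i$ as an edge $x_i\to y_i$ with $x_i\neq y_i$. Two edges are then incompatible exactly when the head of one is the tail of the other.

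Fix triple $1$. Every other triple $j$ satisfies $y_j=x_1$ or $x_j=y_1$. With $N\geq 4$ there are three other triples, so by pigeonhole two of them, $j$ and $k$, fall into the same option.
\begin{itemize}
\item If $y_j=y_k=x_1$, incompatibility of $j,k$ forces $x_j=y_k=y_j$ or $x_k=y_j=y_k$. Both contradict $x\neq y$.
\item If $x_j=x_k=y_1$, incompatibility forces $y_j=x_k=x_j$ or $y_k=x_j=x_k$. Both are again contradictions.
\end{itemize}
This settles the first bullet. A useful by-product for later is that a pairwise incompatible family of such edges has at most three members.

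\emph{Type (c).} Here the three letters $x_i,y_i,z_i$ of each triple are pairwise distinct. Two triples $i,j$ are incompatible if and only if $z_j\in\{x_i,y_i\}$ or $z_i\in\{x_j,y_j\}$.

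Fix triple $1$ and split the remaining triples into three groups:
\begin{itemize}
\item $A_x=\{j: z_j=x_1\}$;
\item $A_y=\{j: z_j=y_1\}$;
\item $B=\{j: z_1\in\{x_j,y_j\}\}$.
\end{itemize}
A triple may lie in several groups, which only helps. Within $A_x$ (and likewise $A_y$) two triples $j,k$ share the same last letter $z_j=z_k$. Since $z_j\notin\{x_j,y_j\}$, they cannot be incompatible, so $|A_x|,|A_y|\leq 1$.

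For $j\in B$, let $w_j$ be the element of $\{x_j,y_j\}$ other than $z_1$. For $j,k\in B$, we have $z_k\neq z_1$, because $z_k$ differs from both $x_k$ and $y_k$, one of which is $z_1$. Hence $z_k\in\{x_j,y_j\}$ means exactly $z_k=w_j$. So incompatibility within $B$ is the head-to-tail relation on the edges $w_j\to z_j$, and the argument for type (a) gives $|B|\leq 3$. This yields the crude bound $N\leq 1+1+1+3=6$, one more than needed.

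The main obstacle is removing this last unit, which means showing that $|A_x|=|A_y|=1$ and $|B|=3$ cannot hold simultaneously. First I would use the structure of the (a)-argument: a maximal family of three in $B$ must be a directed triangle $w\to z\to z'\to w$ on three letters. This pins down $x_j,y_j,z_j$ for $j\in B$ up to the choice of which coordinate holds $z_1$. Next I would check that the single triple in $A_x$, whose last letter is $x_1$, cannot be incompatible with all three triangle triples. Its last letter $x_1$ would have to lie in $\{x_j,y_j\}$ for the triples with which it is incompatible through that clause. Otherwise one of the triangle's $z$-letters must lie among its first two letters, and it has only two such slots for three distinct $z$-letters. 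Here I expect a short case analysis to be needed, possibly by choosing triple $1$ cleverly, for example one whose $z$-letter occurs most often in $x$- or $y$-positions. If that still leaves cases open, my fallback is a double counting of the incompatibility clauses over all $\binom{6}{2}$ pairs.
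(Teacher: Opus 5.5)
Your argument for types (a) and (b) is correct. Fixing triple $1$ and pigeonholing the other three into ``same head'' or ``same tail'' is a small variant of the paper's double counting, which charges each incompatible pair to a triple and only ever uses shared heads. Your type (c) reduction to $|A_x|,|A_y|\le 1$ and $|B|\le 3$, via the edges $w_j\to z_j$, is also correct.

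The gap is that this proves the type (c) bullet only for $N\ge 7$. The step meant to reach $N\ge 6$ is a list of options (a case analysis, a clever choice of triple $1$, a fallback double count), not an argument. That unit is not slack. Over distinct letters $a,b,c,d,e$, the five type (c) triples $(a,b,c)$, $(b,e,a)$, $(d,e,b)$, $(c,a,d)$, $(c,d,e)$ are pairwise incompatible. So $6$ is exactly the threshold, and something beyond your counting is needed.

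Your own route closes with one further observation:
\begin{enumerate}
\item If $N=6$, the five other triples fill $A_x$, $A_y$, $B$ without overlap. Hence $|B|=3$ and $A_x=\{a\}$ with $a\notin B$.
\item As you note, $B$ is then a directed triangle. So each tail $w_j$ ($j\in B$) is the head $z_{j'}$ of another $j'\in B$, and the three heads are distinct.
\item Since $z_a=x_1\neq z_1$, incompatibility of $a$ with $j\in B$ means $z_j\in\{x_a,y_a\}$ or $x_1=w_j$.
\item Three distinct heads do not fit into the two slots $\{x_a,y_a\}$, so $x_1=w_j$ for some $j\in B$.
\item Then $z_{j'}=w_j=x_1$, i.e.\ $j'\in A_x$ with $j'\neq a$, contradicting $|A_x|\le 1$.
\end{enumerate}

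The paper uses exactly your fallback. It charges each of the $\binom{6}{2}=15$ incompatible pairs to a triple $i$ with $\sigma_{3,j}\in\{\sigma_{1,i},\sigma_{2,i}\}$. Some $i$ is then charged at least $3$ times. Three last letters lying in the two-element set $\{\sigma_{1,i},\sigma_{2,i}\}$ force two triples with the same last letter, and those are compatible.

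The paper's route is shorter and uniform across the types. Yours gives more structure (extremal families are triangles) but needs the extra step above.
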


\begin{proof}
For type (a),
$\sigma_{1,i}\neq \sigma_{2,i}=\sigma_{3,i} (\neq \sigma_{1,i}), $ for all $i$.
We look for a pair $i<j$ such that $\sigma_{1,i}\neq\sigma_{2,j}$
and $\sigma_{1,j}\neq \sigma_{2,i}$.
Assume to the contrary that no such pair exists.
That is, for every pair of different $i$ and $j$,
either $\sigma_{1,i}=\sigma_{2,j}$ or $\sigma_{1,j}= \sigma_{2,i}$.
With each triple $(\sigma_{1,i},\sigma_{2,i},\sigma_{3,i})$, $i=1,2,3,4$,
we associate a counter $c_i$.
We initialize the counters $c_1=c_2=c_3=c_4=0$ and
for each pair $\{i,j\}$ increase $c_i$ by one if $\sigma_{1,i}=\sigma_{2,j}$
and increase $c_j$ by one if $\sigma_{1,j}= \sigma_{2,i}$.
Since there are $\binom{4}{2}=6$ such pairs, at the end $\sum_{i=1}^4c_i\geq 6$.
By the pigeonhole principle, there is an $i$ with $c_i\geq 2$.
Therefore, there are $j$ and $j^\prime$ such that $j\neq j^\prime$
and $\sigma_{1,i}=\sigma_{2,j}=\sigma_{2,j^\prime}$.
It follows that those $j$ and $j^\prime$ satisfy the requirements
for compatibility, in contradiction our assumption.
That is, $\sigma_{1,j}\neq \sigma_{2,j}=\sigma_{2,j^\prime}$
and $\sigma_{1,j^\prime}\neq \sigma_{2,j^\prime}=\sigma_{2,j}$.
This completes the proof of type (a).

The case of type (b) is symmetric.

For the case of type (c), the contrary assumption implies that,
for every pair of different $i$ and $j$,
one of the following holds:
$\sigma_{2,i}=\sigma_{3,j}$, $\sigma_{1,i}=\sigma_{3,j}$,
$\sigma_{2,j}=\sigma_{3,i}$, or $\sigma_{1,j}=\sigma_{3,i}$.
We increase $c_i$ for the two first cases
and increase $c_j$ for the last two cases.
Then $\sum_{i=1}^6c_i\geq \binom{6}{2}=15$.
Therefore, there is an $i$ with $c_i\geq 3$.
Hence, there are $j$ and $j^\prime$ with $j\neq j^\prime$
such that $\sigma_{3,j}=\sigma_{3,j^\prime}$,
implying that the two latter
are a compatible pair.
\end{proof}

\begin{proof}[Proof of~\Cref{t: pl one fma}]
Let
$\bsigma
\in L\left(\bA\right)$ be a word of length at least $11n^2+n$.
Take an accepting run of $\bA$ on $\bsigma$
and focus on the first $11n^2+n=(11n+1)n$ transitions.

If there are $n$ consecutive register-preserving
transitions, then there is also a repetition of states,
which implies that there is a repetition of configurations.
In this case, the pumping occurs by~\Cref{l: same cfg}.

Otherwise, every $n$ consecutive transitions contain
a register-updating transition.
Since we are considering $(11n+1)n$ transitions,
there are at least $11n+1$ register-updating transitions.

We now classify these updating transitions according
to the three types (a), (b), and (c).
If there are at least $3n+1$ transitions of type (a),
then four of them start in the same state,
and~\Cref{d: comp} together with~\Cref{p: comp tri}
yields a compatible pair.
The same argument applies to type (b),
using the target state instead of the source state.
Finally, if there are $5n+1$ transitions of type (c),
then six of them have the same target state,
and again~\Cref{p: comp tri} yields a compatible pair.
If none of these alternatives holds,
the total number of register-updating transitions
is at most $3n+3n+5n=11n$,
contradicting the previous paragraph.
Therefore, a compatible pair exists,
and the word is pumpable by~\Cref{l: diff registers}.
\end{proof}

\begin{remark}
\label{r: fma1}
For non-guessing one-register finite-memory automata,
this bound improves to $3n^2+n$,
because transitions of types (b) and (c) are impossible;
see~\Cref{r: tran guess}.
\end{remark}

\subsection{Minimal pumping length}

For automata over a finite alphabet,
it is known that the minimal constant needed for pumping
is at most the number of states.
In particular, the pumping length upper bound does not depend
on the size of the alphabet.
Thus, it is natural to ask whether
the pumping length constant in~\Cref{t: pl one fma}
can be improved to be linear in the number
of states of the automaton.

In this section, we show that
a quadratic blowup in the pumping constant is generally inevitable.
In particular,
the translation from quasi-regular language $L$
to its restriction $L\cap (\Sigma^\prime)^\ast$ to
a finite alphabet $\Sigma^\prime\subset\Sigma$,
is not necessarily linear in the number of states of
the automaton that accepts $L$ and
the size of the alphabet $\Sigma^\prime$.
It follows that a description of a regular language by
a finite-memory automaton is sometimes shorter than that
by an ordinary finite state automaton.

\begin{remark}
In fact,
it has already been conjectured that this translation might
be exponential, because the non-emptiness problem for
deterministic finite-memory automata
is known to be NP-complete~\cite[Theorem 4]{SakamotoI00}.
Note that the emptiness of a quasi-regular
language is equivalent
to the emptiness of its restriction to an alphabet of
size equal to the number of registers
(cf.~\cite[Theorem~1]{KaminskiF94}),
and the latter can be decided in linear time.
\end{remark}

\begin{example}
Let $\bA_{\text{sq,n}}$ be
a one-register finite-memory automaton,
with transitions depicted in~\Cref{fig: aut sq n}.
For readability, $=$ denotes the constraint $x_1=y=x_1^\prime$
and
$\neq_{\text{replace}}$ denotes $x_1\neq y=x_1^\prime$.
The language $L(\bA_{\text{sq,n}})$ consists of all the words
of the form,
\begin{equation}
\label{eq: example}
\prod_{i=1}^m (\tau_i)^{k_i}\#\tau_i (\tau_{i+1})^{k_i+1}\$\,,
\end{equation}
where
\begin{itemize}
\item $\#,\$ \in D\subset \Sigma$ are distinguished symbols;
\item $\tau_1,\tau_2,\ldots,\tau_{m+1}\in\Sigma\setminus\{\#,\$\}$
is a sequence of symbols such that
for all $i=1,2,\ldots,m$, $\tau_i\neq\tau_{i+1}$;
and
\item $k_1,k_2,\ldots,k_m\in\{1,2,\ldots,n\}$.
\end{itemize}
\end{example}

\begin{figure}[ht]
\begin{center}
\begin{tikzpicture}[scale=0.15]
\tikzstyle{every node}+=[inner sep=0pt]
\draw [black] (9.7,-44.1) circle (3);
\draw (9.7,-44.1) node {$s_0$};
\draw [black] (9.7,-44.1) circle (2.4);
\draw [black] (21.7,-44.1) circle (3);
\draw (21.7,-44.1) node {$s_{0,1}$};
\draw [black] (33.4,-44.1) circle (3);
\draw (33.4,-44.1) node {$s_{0,2}$};
\draw [black] (44.6,-44.1) circle (3);
\draw (44.6,-44.1) node {$s_{0,3}$};
\draw [black] (65.7,-44.1) circle (3);
\draw (65.7,-44.1) node {$s_{0,n}$};

\draw (55.7,-44.1) node {$\cdots$};

\draw [black] (21.7,-32.7) circle (3);
\draw (21.7,-32.7) node {$s_{1,1}$};
\draw [black] (33.4,-32.7) circle (3);
\draw (33.4,-32.7) node {$s_{1,2}$};
\draw [black] (44.6,-32.7) circle (3);
\draw (44.6,-32.7) node {$s_{1,3}$};
\draw [black] (65.7,-32.7) circle (3);
\draw (65.7,-32.7) node {$s_{1,n}$};

\draw [black] (21.7,-21.2) circle (3);
\draw (21.7,-21.2) node {$s_{2,1}$};
\draw [black] (33.4,-21.2) circle (3);
\draw (33.4,-21.2) node {$s_{2,2}$};
\draw [black] (44.2,-21.2) circle (3);
\draw (44.2,-21.2) node {$s_{2,3}$};
\draw [black] (65.7,-21.2) circle (3);
\draw (65.7,-21.2) node {$s_{2,n}$};

\draw [black] (21.7,-9.9) circle (3);
\draw (21.7,-9.9) node {$s_{3,1}$};
\draw [black] (33.4,-9.9) circle (3);
\draw (33.4,-9.9) node {$s_{3,2}$};
\draw [black] (44.2,-9.9) circle (3);
\draw (44.2,-9.9) node {$s_{3,3}$};
\draw [black] (65.6,-9.5) circle (3);
\draw (65.6,-9.5) node {$s_{3,n}$};
\draw [black] (9.7,-9.5) circle (3);
\draw (9.7,-9.5) node {$s_{3,0}$};
\draw (55.7,-9.5) node {$\cdots$};

\draw [black] (2.7,-44.1) -- (6.7,-44.1);
\fill [black] (6.7,-44.1) -- (5.9,-43.6) -- (5.9,-44.6);

\draw [black] (12.7,-44.1) -- (18.7,-44.1);
\fill [black] (18.7,-44.1) -- (17.9,-43.6) -- (17.9,-44.6);
\draw (15.7,-44.6) node [below] {$=$};
\draw [black] (24.7,-44.1) -- (30.4,-44.1);
\fill [black] (30.4,-44.1) -- (29.6,-43.6) -- (29.6,-44.6);
\draw (27.55,-44.6) node [below] {$=$};
\draw [black] (36.4,-44.1) -- (41.6,-44.1);
\fill [black] (41.6,-44.1) -- (40.8,-43.6) -- (40.8,-44.6);
\draw (39,-44.6) node [below] {$=$};
\draw [black] (47.6,-44.1) -- (52.7,-44.1);
\fill [black] (52.7,-44.1) -- (51.9,-43.6) -- (51.9,-44.6);
\draw (50.15,-44.6) node [below] {$=$};
\draw [black] (58.7,-44.1) -- (62.7,-44.1);
\fill [black] (62.7,-44.1) -- (61.9,-43.6) -- (61.9,-44.6);
\draw (60.7,-44.6) node [below] {$=$};
\draw [black] (21.7,-41.1) -- (21.7,-35.7);
\fill [black] (21.7,-35.7) -- (21.2,-36.5) -- (22.2,-36.5);
\draw (22.2,-38.4) node [right] {$\#$};
\draw [black] (33.4,-41.1) -- (33.4,-35.7);
\fill [black] (33.4,-35.7) -- (32.9,-36.5) -- (33.9,-36.5);
\draw (33.9,-38.4) node [right] {$\#$};
\draw [black] (44.6,-41.1) -- (44.6,-35.7);
\fill [black] (44.6,-35.7) -- (44.1,-36.5) -- (45.1,-36.5);
\draw (45.1,-38.4) node [right] {$\#$};
\draw [black] (65.7,-41.1) -- (65.7,-35.7);
\fill [black] (65.7,-35.7) -- (65.2,-36.5) -- (66.2,-36.5);
\draw (66.2,-38.4) node [right] {$\#$};
\draw [black] (21.7,-29.7) -- (21.7,-24.2);
\fill [black] (21.7,-24.2) -- (21.2,-25) -- (22.2,-25);
\draw (22.2,-26.95) node [right] {$=$};
\draw [black] (33.4,-29.7) -- (33.4,-24.2);
\fill [black] (33.4,-24.2) -- (32.9,-25) -- (33.9,-25);
\draw (33.9,-26.95) node [right] {$=$};
\draw [black] (44.5,-29.7) -- (44.3,-24.2);
\fill [black] (44.3,-24.2) -- (43.83,-25.02) -- (44.83,-24.98);
\draw (44.95,-26.94) node [right] {$=$};
\draw [black] (65.7,-29.7) -- (65.7,-24.2);
\fill [black] (65.7,-24.2) -- (65.2,-25) -- (66.2,-25);
\draw (66.2,-26.95) node [right] {$=$};
\draw [black] (21.7,-18.2) -- (21.7,-12.9);
\fill [black] (21.7,-12.9) -- (21.2,-13.7) -- (22.2,-13.7);
\draw (22.2,-15.55) node [right] {$\neq_{\text{replace}}$};
\draw [black] (33.4,-18.2) -- (33.4,-12.9);
\fill [black] (33.4,-12.9) -- (32.9,-13.7) -- (33.9,-13.7);
\draw (33.9,-15.55) node [right] {$\neq_{\text{replace}}$};
\draw [black] (44.2,-18.2) -- (44.2,-12.9);
\fill [black] (44.2,-12.9) -- (43.7,-13.7) -- (44.7,-13.7);
\draw (44.7,-15.55) node [right] {$\neq_{\text{replace}}$};
\draw [black] (65.67,-18.2) -- (65.63,-12.5);
\fill [black] (65.63,-12.5) -- (65.13,-13.3) -- (66.13,-13.3);
\draw (66.16,-15.35) node [right] {$\neq_{\text{replace}}$};
\draw [black] (62.6,-9.5) -- (58.7,-9.5);
\fill [black] (58.7,-9.5) -- (59.5,-10) -- (59.5,-9);
\draw (60.65,-9) node [above] {$=$};
\draw [black] (52.7,-9.6) -- (47.2,-9.8);
\fill [black] (47.2,-9.8) -- (48.02,-10.27) -- (47.98,-9.27);
\draw (49.92,-9.17) node [above] {$=$};
\draw [black] (41.2,-9.9) -- (36.4,-9.9);
\fill [black] (36.4,-9.9) -- (37.2,-10.4) -- (37.2,-9.4);
\draw (38.8,-9.4) node [above] {$=$};
\draw [black] (30.4,-9.9) -- (24.7,-9.9);
\fill [black] (24.7,-9.9) -- (25.5,-10.4) -- (25.5,-9.4);
\draw (27.55,-9.4) node [above] {$=$};
\draw [black] (18.7,-9.8) -- (12.7,-9.6);
\fill [black] (12.7,-9.6) -- (13.48,-10.13) -- (13.51,-9.13);
\draw (15.72,-9.17) node [above] {$=$};
\draw [black] (9.7,-12.5) -- (9.7,-41.1);
\fill [black] (9.7,-41.1) -- (10.2,-40.3) -- (9.2,-40.3);
\draw (9.2,-26.8) node [left] {$\$ $};
\end{tikzpicture}
\end{center}
\centering
\caption{Visual representation of $\bA_{\text{sq,n}}$. }
\label{fig: aut sq n}
\end{figure}

\begin{lemma}
\label{l: no pump fma}
Let
\[
\bsigma =
\prod_{i=1}^n (\tau_i)^i\# \tau_i(\tau_{i+1})^{i+1}\$ 
\, ,
\]
i.e., in~\eqref{eq: example}, $m = n$,
all $\tau_1,\tau_2,\ldots,\tau_{n+1}$ are pairwise different and
$k_i=i$ for $i=1,2,\ldots,n$.
Then~$\bsigma$ contains no nonempty pattern that can be pumped
with respect to $\langof{\bA_{\text{sq,n}}}$.
\end{lemma}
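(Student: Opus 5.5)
Throughout, "pumped" is taken in the classical sense: a factorisation $\bsigma=\btau\bupsilon\bvarphi$ with $\size{\bupsilon}>0$ such that $\btau\bupsilon^k\bvarphi\in\langof{\bA_{\text{sq,n}}}$ for all $k\geq 0$. No length restriction on $\btau\bupsilon$ is imposed, since the lemma claims that no nonempty factor anywhere in $\bsigma$ can be pumped. The plan is a case analysis on where $\bupsilon$ sits relative to the block structure of~\eqref{eq: example}. Call each factor $B_i=(\tau_i)^i\#\tau_i(\tau_{i+1})^{i+1}\$$ a block. Every word of $\langof{\bA_{\text{sq,n}}}$ factors uniquely into blocks by cutting after each $\$$. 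Within a block, $\#$ occurs exactly once, and the parameters must satisfy $k\le n$. Moreover, the letter repeated after $\#$ in block $i$ must equal the letter repeated at the start of block $i+1$. I would first record these structural facts from the transition diagram. In particular, the language is exactly the set of words of the form~\eqref{eq: example}, with $k_i\le n$ and consecutive $\tau$'s distinct.

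The cases are as follows. (1) If $\bupsilon$ contains $\#$ or $\$$, compare $\btau\bupsilon^0\bvarphi$ with $\bsigma$. The number of $\#$'s and the number of $\$$'s must be equal, and each block must contain exactly one $\#$ before its $\$$. If $\bupsilon$ contains a different number of $\#$'s and $\$$'s, deletion unbalances the counts. If the counts are equal but nonzero, then $\bupsilon$ begins inside one block and ends inside another. After deletion, the prefix of one block is glued to the suffix of another. I would check that the glued block violates the length equation: the number of copies of $\tau_{i+1}$ after $\#\tau_i$ must be one more than the number of leading copies of $\tau_i$. In $\bsigma$, block $i$ has the pair $(i,i+1)$, and these pairs are distinct for distinct $i$. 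The glue also breaks the chaining condition $\tau$'s, or the condition that the letter right after $\#$ equals the leading letter. A cleaner route is to use $k=2$ instead of deletion: pumping up a factor that spans several blocks keeps the block count consistent. However, it creates a junction where the block after the copy of $\bupsilon$ begins with a letter different from the required one. This holds because all $\tau$'s are pairwise distinct, so the chaining $\tau_{i+1}$ at the start of the next block fails. I would pick whichever of $k=0$ or $k=2$ gives the simplest contradiction in each subcase.

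(2) If $\bupsilon$ contains no $\#$ or $\$$, it lies inside a single block $i$ in one of three places: the run $(\tau_i)^i$, the single $\tau_i$ after $\#$, or the run $(\tau_{i+1})^{i+1}$. A factor spanning $\tau_i\tau_{i+1}$ after $\#$ cannot be pumped, because pumping it produces an alternation pattern not of the form $\tau\,\tau'^{*}$. Pumping within the run $(\tau_i)^i$ alone changes $k_i$ but not the trailing run, breaking the equation "trailing count $=k_i+1$". Pumping within the trailing run breaks the same equation. Deleting the lone $\tau_i$ after $\#$ breaks the requirement that exactly one copy of the old letter follow $\#$. One subtle point is that the parse of block $i$ is ambiguous if the prefix run and the post-$\#$ structure could be reinterpreted. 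Since $\#$ is unique per block, the leading run is determined, so this ambiguity does not arise. Another subtle point is that $k=0$ when $i=1$ could make $k_1=0$, which is also outside the allowed range $k_i\geq 1$.

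I expect the main obstacle to be case (1) with balanced multi-block factors. There, pumping a union of whole blocks preserves membership: if $\bupsilon=B_j\cdots B_\ell$ exactly, then $\bupsilon^2$ would require $\tau_{\ell+1}=\tau_j$. Since the $\tau$'s are pairwise distinct, this fails, and I will explicitly use this to handle every cyclic shift of a block sequence. The argument would rest on the observation that any $\bupsilon$ with balanced markers, once pumped, must be consistent with the chaining around a cycle, and this forces two distinct $\tau$'s to coincide.
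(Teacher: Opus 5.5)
Your separator-count step and your case (2) are sound. The paper handles separator-free $\bupsilon$ more cheaply: every separator-free factor of a word in $\langof{\bA_{\text{sq,n}}}$ has length at most $n+2$.

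The gap is in case (1), balanced factors containing separators, which is the real content of the lemma. You never fix a case split there, and each of the two mechanisms you offer to choose between fails on an explicit family of factors.

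\emph{Deletion can land back in the language.} For $i<j\le n$ take
\[
\bupsilon=(\tau_{i+1})^{i+1}\$\,B_{i+1}\cdots B_{j-1}(\tau_j)^j\#\tau_j(\tau_{j+1})^{j-i}.
\]
Then $\btau\bvarphi=B_1\cdots B_{i-1}(\tau_i)^i\#\tau_i(\tau_{j+1})^{i+1}\$\,B_{j+1}\cdots B_n$. Its glued block satisfies both the exponent equation and the chaining, so your claim that the glued block always violates one of them is false.

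\emph{Your closing ``key observation'' is also false.} For $j\geq i+2$ take
\[
\bupsilon=(\tau_{i+1})^{i+1}\$\,B_{i+1}\cdots B_{j-1}(\tau_j)^j\#\tau_j .
\]
For $k\geq 2$, each junction block of $\btau\bupsilon^k\bvarphi$ is $(\tau_j)^j\#\tau_j(\tau_{i+1})^{i+1}\$$, preceded by $B_{j-1}$ and followed by $B_{i+1}$. Every letter, replace and chaining constraint holds, and no two distinct $\tau$'s are forced to coincide. The only violation is that the trailing run has length $i+1$ instead of $j+1$, and the same is true for $k=0$.

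What is missing is the paper's organization together with an exponent count at the junction. Since separators alternate in $\bsigma$, a balanced $\bupsilon$ has separator pattern $(\#\$)^r$ or $(\$\#)^r$, and it suffices to examine the junction of two copies.

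In the $(\#\$)^r$ case, let the first $\#$ lie in block $i$ and the last $\$$ in block $j$. The junction reads $\$(\tau_{j+1})^b(\tau_i)^a\#\tau_i$, while the register after that $\$$ is $\tau_{j+1}\neq\tau_i$.

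In the $(\$\#)^r$ case, write $\bupsilon=\bchi\$\cdots\#\bchi'$, with the first $\$$ in block $i$ and the last $\#$ in block $j>i$. The junction block $(\tau_j)^j\#\bchi'\bchi\$$ is followed by $(\tau_{i+1})^{i+1}$, which forces $\bchi'\bchi=\tau_j(\tau_{i+1})^{j+1}$.
\begin{itemize}
\item For $j=i+1$ this violates the replace constraint.
\item For $j>i+1$, $\bchi'$ must be exactly $\tau_j$: a longer prefix contains $\tau_{j+1}$, and if $\bchi'$ were empty then $\bchi$ would have to begin with $\tau_j$. Hence $\bchi=(\tau_{i+1})^{j+1}$, which is too long to be a suffix of $\tau_i(\tau_{i+1})^{i+1}$.
\end{itemize}

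You invoke the length equation only in the deletion route, where the first family shows it need not fail. It has to be applied at the junction of two copies, as above.
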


\begin{proof}
Assume, to the contrary, that
$\bsigma=\btau\bupsilon\bvarphi$, where
$\bupsilon\neq\varepsilon$, and that
\[
\btau\bupsilon^k\bvarphi
\in\langof{\bA_{\text{sq,n}}}
\qquad
\mbox{for all } k=0,1,2,\ldots.
\]

The factor $\bupsilon$ contains equally many occurrences of
$\#$ and $\$$.
It must also contain a separator, since otherwise pumping it
would create a separator-free factor of length greater than $n+2$.
Hence its separators have one of the forms
\[
(\#\$)^r
\qquad\mbox{or}\qquad
(\$\#)^r
\]
for some $r\geq 1$.

Suppose first that they have the form $(\#\$)^r$.
Let the first $\#$ belong to block $i$ and the last $\$$ to block $j$.
Then
\[
\bupsilon =
(\tau_i)^a\#\cdots\$(\tau_{j+1})^b
\]
for some $0\leq a\leq i$ and $0\leq b\leq j+1$.
At the junction of two copies of $\bupsilon$, the run
must read
\[
\$(\tau_{j+1})^b(\tau_i)^a\#\tau_i.
\]
After the $\$$, the register contains $\tau_{j+1}$.
Thus the run must read a nonempty power of $\tau_{j+1}$ before the
next $\#$ and must read $\tau_{j+1}$ immediately after it.
This is impossible because $\tau_i\neq\tau_{j+1}$.

Suppose now that the separators have the form $(\$\#)^r$.
Let the first $\$$ belong to block $i$ and the last $\#$ to block $j$,
where $i<j$.
Write
\[
\bupsilon
=\bchi\$\cdots\#\bchi^\prime,
\]
where $\bchi$ is a suffix of
$\tau_i(\tau_{i+1})^{i+1}$ and $\bchi^\prime$ is a prefix of
$\tau_j(\tau_{j+1})^{j+1}$.
At the junction of two copies, the run of the automaton forces that
\[
\bchi^\prime\bchi
=
\tau_j(\tau_{i+1})^{j+1}.
\]
If $j=i+1$, this contradicts the required change of register.
If $j>i+1$, then $\bchi^\prime=\tau_j$, because every longer prefix
contains $\tau_{j+1}$.
Hence
\[
\bchi
=
(\tau_{i+1})^{j+1},
\]
which is impossible because $\bchi$ is a suffix of
$\tau_i(\tau_{i+1})^{i+1}$ and $j+1>i+1$.

Thus, no nonempty factor of $\bsigma$ can be pumped.
\end{proof}

Note that $\bA_{\text{sq,n}}$ has $4n+2$ states
and $5n+1$ transitions. 
Thus, the description of this automaton is of size $O(n)$.
However, the word described in~\Cref{l: no pump fma}
(which cannot be pumped) is of length $n^2+5n$.
It follows that the minimal constant needed for pumping 
the language accepted by this automaton is 
at least quadratic in the automaton description size.

\begin{corollary}
\label{c: quad}
There is a family $(\bA_n)_{n\geq 1}$
of one-register finite-memory automata
of description size $O(n)$
such that the minimal classical pumping length of $\langof{\bA_n}$
is $\Omega(n^2)$.
\end{corollary}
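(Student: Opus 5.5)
The family will be $\bA_n:=\bA_{\text{sq},n}$, and the claim follows by combining \Cref{l: no pump fma} with a count of the size of this automaton and of the length of the witness word.

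\emph{Description size.} The figure shows the automaton explicitly. There is the initial state $s_0$, four rows of $n$ states $s_{0,k},s_{1,k},s_{2,k},s_{3,k}$, and the extra state $s_{3,0}$, giving $4n+2$ states in total. The transitions are the horizontal chains in the bottom and top rows, the three vertical layers, and the single $\$$-edge, giving $5n+1$ transitions. Every constraint is one of the three fixed formulas $=$, $\neq_{\text{replace}}$, or a test against a distinguished letter $\#,\$$, so each has constant size. Hence the description has size $O(n)$.

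\emph{Lower bound.} Fix $n$ and let $\bsigma$ be the word of \Cref{l: no pump fma}, with pairwise distinct $\tau_1,\dots,\tau_{n+1}$ and $k_i=i$. It belongs to $\langof{\bA_n}$ by the description of the language in~\eqref{eq: example}. Block $i$ has length $i+1+1+i+1+1=2i+4$, so
\[
|\bsigma|=\sum_{i=1}^n(2i+4)=n^2+5n .
\]
Suppose $\langof{\bA_n}$ were $(p,1)$-pumpable for some $p\le n^2+5n$. Then $\bsigma$ would have a decomposition $\btau\bupsilon\bvarphi$ with $\bupsilon\neq\varepsilon$ such that $\btau\bupsilon^k\bvarphi\in\langof{\bA_n}$ for all $k\ge0$. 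With $\alpha=\id$, conditions \eqref{eq: k fma} and \eqref{eq: zero} reduce exactly to classical pumping. This contradicts \Cref{l: no pump fma}, which excludes every nonempty pumpable factor, irrespective of the constraint $|\btau\bupsilon|\le p$. So the minimal classical pumping length is greater than $n^2+5n=\Omega(n^2)$.

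The only point needing care is the membership claim $\bsigma\in\langof{\bA_n}$, that is, that each block $(\tau_i)^i\#\tau_i(\tau_{i+1})^{i+1}\$$ is actually realized by the automaton. I would check this by tracing the run. The $=$-chain along the bottom row reads $\tau_i^{\,i}$ with the register holding $\tau_i$ and ends in $s_{0,i}$. The edge labelled $\#$ leads to $s_{1,i}$. Then $=$ reads $\tau_i$. Then $\neq_{\text{replace}}$ loads $\tau_{i+1}$. The top row reads $i$ more copies of $\tau_{i+1}$ back to $s_{3,0}$, and $\$$ returns to $s_0$ with the register holding $\tau_{i+1}$, which starts the next block. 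I expect this bookkeeping, rather than any real mathematical obstacle, to be the main thing to get right.
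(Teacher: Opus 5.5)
Your proposal is correct and follows the paper's own argument: you take $\bA_{\text{sq},n}$, count its $4n+2$ states and $5n+1$ constant-size transitions, and apply \Cref{l: no pump fma} to the unpumpable word of length $n^2+5n$. Your run trace is a useful check that the paper omits, but note that the first $=$-edge out of $s_0$ compares the input with the initial register value $\theta_0$, so for $\bsigma\in\langof{\bA_n}$ you must take $\tau_1=\theta_0$, with $\tau_2,\ldots,\tau_{n+1}$ distinct from it.
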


\subsection{Computing minimal pumping lengths}
As we have seen in~\Cref{t: ud prob},
universality of the automaton language
reduces to computing the minimal pumping constants.
However, for one-register finite-memory automata with guessing,
universality is also undecidable~\cite[Exercise 9]{Bojanczyk19}
and, therefore,
so is computing their minimal pumping lengths.
If we disallow guessing, universality becomes decidable
but non primitive-recursive~\cite[Theorem 5.2]{DemriL09}.
In this section, we show that
it is possible to compute
the minimal classical pumping length
for non-guessing one-register automata.

\begin{theorem}
\label{t: fma1 pll}
For a one-register non-guessing finite-memory automaton $\bA$,
\[
p_{\text{min}}=\min\{p\in\mathbb{N}:L(\bA)\mbox{ is }
(p,1)\mbox{-pumpable}\}
\]
is computable.
\end{theorem}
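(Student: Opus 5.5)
Since $L(\bA)$ is $(3n^2+n,1)$-pumpable by~\Cref{r: fma1}, and $(p,1)$-pumpability is monotone in $p$ (a decomposition witnessing pumpability for $p$ also witnesses it for every $p'\geq p$), we have $p_{\text{min}}\leq 3n^2+n$. It therefore suffices to give an algorithm that, for each fixed $p\leq 3n^2+n$, decides whether $L(\bA)$ is $(p,1)$-pumpable; we then output the least such $p$. For fixed $p$, the language is \emph{not} $(p,1)$-pumpable exactly when some word $\bsigma\in L(\bA)$ with $|\bsigma|\geq p$ has the property that for every decomposition $\bsigma=\btau\bupsilon\bvarphi$ with $|\bupsilon|>0$ and $|\btau\bupsilon|\leq p$ there is some $k\geq 0$ with $\btau\bupsilon^k\bvarphi\notin L(\bA)$. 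I will reduce this existence question to a non-emptiness check.

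The key step is to abstract the prefix $\btau\bupsilon$ of length at most $p$ up to the equivariance of~\Cref{p: permutation word fma}. Up to $D$-permutations there are only finitely many prefixes of length $\leq p$: each is determined by its equality type together with its relation to $D$ and to the initial register value $\theta_0$. Fix such a representative prefix $\bpsi$, and enumerate its finitely many splittings $\bpsi=\btau\bupsilon$. For one splitting, the words $\btau\bupsilon^k$ are concrete. I then need to decide whether there is a suffix $\bvarphi$ such that $\btau\bupsilon\bvarphi\in L(\bA)$ but $\btau\bupsilon^k\bvarphi\notin L(\bA)$ for some $k$. The set of such $k$ is the troublesome part, so I would show that it suffices to test $k$ in a bounded range $0\leq k\leq K$. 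The reason is that for a non-guessing one-register automaton, the set of configurations reachable on $\btau\bupsilon^k$ lies in a finite family of subsets of $S\times([\bpsi]\cup\{\theta_0\}\cup\{\text{any fresh}\})$. These subsets depend on $k$ only eventually periodically, so $K$ can be computed from the preperiod and period of that sequence.

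Then non-pumpability of some word extending $\bpsi$ amounts to: there is $\bvarphi$ such that $\bpsi\bvarphi\in L(\bA)$ and, for every splitting $j$ of $\bpsi$, some $k_j\leq K$ satisfies $\btau_j\bupsilon_j^{k_j}\bvarphi\notin L(\bA)$. Guessing the tuple $(k_j)_j$ and choosing a representative with $|\bpsi|=p$ (which is enough, because $|\bsigma|\geq p$), this is the non-emptiness of a language of the form $\bigcap_i L_i\cap\bigcap_j\overline{L'_j}$. Each $L_i$ and $L'_j$ is a residual $u^{-1}L(\bA)$ for a concrete word $u$, hence again accepted by a non-guessing one-register automaton with its initial configurations fixed. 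Non-emptiness of a Boolean combination that includes complements is the hard core of the argument, so I would not try to complement in general. Instead, I would use the decidability of universality/inclusion for non-guessing one-register automata~\cite[Theorem 5.2]{DemriL09}. Concretely, I ask whether $L_0\cap\dots$ is included in $\bigcup_j L'_j$. One-register non-guessing automata are closed under intersection only with more registers, so I would either invoke the underlying well-quasi-order argument of Demri--Lazi\'c directly, or encode the tuple of runs as a single alternating one-register automaton. Emptiness for that model is decidable by the same result, and I would use it to decide the question.

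The main obstacle is the cutoff $K$ together with the complementation step. Concretely, I must justify that only boundedly many pump exponents need testing, and that the resulting Boolean condition falls within the decidable fragment (alternating one-register automata, or the Demri--Lazi\'c well-quasi-order argument). My first attempt at both would use the eventual periodicity of reachable abstract configuration sets along $\bupsilon^k$, where equivariance allows us to quotient by fresh letters.
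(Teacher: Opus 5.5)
Your plan works once you repair one slip, and it takes a genuinely different route from the paper.

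\textbf{The slip.} You split the length-$p$ representative as $\bpsi=\btau_j\bupsilon_j$ and test $\btau_j\bupsilon_j^{k_j}\bvarphi$. This covers only decompositions with $|\btau\bupsilon|=p$. Non-pumpability quantifies over all decompositions with $|\btau\bupsilon|\leq p$, so as written your procedure would declare some pumpable languages non-pumpable. The fix is to use three-part splittings $\bpsi=\btau_j\bupsilon_j\bvarphi_j$ with $\bupsilon_j\neq\varepsilon$ and to test $\btau_j\bupsilon_j^{k_j}\bvarphi_j\bvarphi$.

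Your cutoff is then fine, and simpler than you fear. Since $\bA$ is non-guessing, the set of configurations reachable on $\btau_j\bupsilon_j^k$ is a subset of $S\times([\bpsi]\cup\{\theta_0\})$; no fresh values occur. This set evolves under a fixed map as $k$ grows, and membership of $\btau_j\bupsilon_j^k\bvarphi_j\bvarphi$ depends only on it. Hence $K=2^{n(p+1)}$ suffices.

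\textbf{The paper's route.} The paper neither bounds exponents nor decides the suffix condition directly. It proves \Cref{l: aux}: there is a computable $N_{A,p}$ such that only words of length between $p$ and $p+N_{A,p}$ must be checked.

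Write $\bsigma$ as $\bomega$ followed by a suffix, with $|\bomega|=p$. The paper replaces the suffix by a short one from the same language $L_{C,C'}(\bA)$, where $C,C'$ partition $S\times([\bomega]\cup\{\theta_0\})$. It obtains the short suffix from the small-witness bound for alternating non-guessing one-register automata in~\cite{GenkinKP14}. A decomposition that pumps the shortened word also pumps $\bsigma$, because non-guessing keeps the register inside $[\bomega]\cup\{\theta_0\}$.

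Each of the finitely many short words (up to permutation) is then checked by the regular inclusion $\btau\bupsilon^\ast\bvarphi\subseteq L(\bA)\cap[\bsigma]^\ast$ over the finite alphabet $[\bsigma]$. This handles all $k$ at once.

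\textbf{Comparison.} You instead keep the suffix existentially quantified. You decide non-emptiness of a Boolean combination of residuals directly, through emptiness of alternating one-register automata. These residual combinations are essentially unions of the paper's $L_{\{c\},C'}(\bA)$, and your range $k\leq K$ plays the role of the paper's finite-alphabet check.

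\begin{itemize}
\item Your route needs only the qualitative decidability result of Demri--Lazi\'c, not the quantitative small-model bound.
\item The paper's route gives the stronger statement of \Cref{l: aux}, namely an explicit computable bound on the length of a shortest non-pumpable word. It also gives an elementary per-word test.
\end{itemize}

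In both routes, the residual automata start with the register holding specific letters of the prefix. These letters have to be treated as constants, or loaded from the input, when the alternating automaton is built.
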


The proof of~\Cref{t: fma1 pll}
immediately follows from the following lemma,
whose proof is postponed to the end of this section.

\begin{lemma}
\label{l: aux}
Let $\bA$ be a one-register non-guessing
finite-memory automaton and let  $p\in\mathbb{N}$.
Then
there is a computable constant $N_{A,p}$
such that
the following are equivalent:
\begin{enumerate}
\item $\langof{\bA}$ is $(p,1)$-pumpable.
\item Every word $\bsigma\in\langof{\bA}$ with
$p\leq |\bsigma|\leq p+N_{A,p}$
is $(p,1)$-pumpable with respect to $\langof{\bA}$.
\end{enumerate}
\end{lemma}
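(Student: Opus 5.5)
The direction (1)$\Rightarrow$(2) is trivial. For (2)$\Rightarrow$(1) I would show the following: whether a word is $(p,1)$-pumpable depends only on its prefix of length $p$ and on a finite ``acceptance profile'' of the remaining suffix. Once that is established, $N_{A,p}$ can be taken as the length of a shortest suffix realizing each realizable profile. The key property used throughout is non-guessing. After reading any word, the register of $\bA$ contains either $\theta_0$ or a letter of that word. In particular, let $\bsigma=\bpsi_0\bpsi$ with $|\bpsi_0|=p$ and $P=[\bpsi_0]\cup\{\theta_0\}$. Take any decomposition $\bpsi_0\bpsi=\btau\bupsilon\bvarphi$ with $|\btau\bupsilon|\leq p$, and write $\bvarphi=\bchi\bpsi$, so that $\bchi$ is the rest of the prefix. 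Then every configuration reached after $\btau\bupsilon^k\bchi$, for every $k\geq 0$, lies in the finite set $C_P=S\times P$.

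Define the profile $A(\bpsi)=\{c\in C_P : \bpsi\text{ is accepted from }c\}$. Then $\btau\bupsilon^k\bchi\bpsi\in\langof{\bA}$ iff some configuration reachable from $\conf{s_0}{\theta_0}$ on $\btau\bupsilon^k\bchi$ lies in $A(\bpsi)$. Hence $(p,1)$-pumpability of $\bpsi_0\bpsi$ is a function of the pair $(\bpsi_0,A(\bpsi))$ alone. By equivariance (\Cref{p: permutation word fma}), $\bpsi_0$ may be taken up to $D$-permutations, so only finitely many prefixes and finitely many profiles $X\subseteq C_P$ need to be considered. Now suppose $\bsigma=\bpsi_0\bpsi\in\langof{\bA}$ is long and not pumpable. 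Let $\bpsi'$ be a shortest word with $A(\bpsi')=A(\bpsi)$. Then $\bpsi_0\bpsi'$ is also in $\langof{\bA}$, because membership is the instance $k=1$ with the trivial decomposition. It is also not pumpable, and its length is at most $p+|\bpsi'|$. So it suffices to set
\[
N_{A,p}=\max\{\,|\bpsi_X| : \bpsi_0\in\Sigma^{p}/\!\sim,\ X\subseteq C_P\text{ realizable}\,\},
\]
where $\bpsi_X$ is a shortest witness with $A(\bpsi_X)=X$.

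The main obstacle is making $N_{A,p}$ computable, i.e.\ deciding, for a given $X$, whether some $\bpsi$ has $A(\bpsi)=X$. This condition combines positive requirements (acceptance from every $c\in X$) with negative ones (rejection from every $c\in C_P\setminus X$). The negative requirements are universality-type conditions, since all runs must fail, so ordinary emptiness of a one-register automaton does not suffice. My first attempt would be to express the set of such $\bpsi$ by a one-way alternating one-register automaton. Its initial universal branching would conjoin copies of $\bA$ started in the configurations of $X$ with dual (complemented) copies started in $C_P\setminus X$. The finitely many letters of $P$ would be handled as extra constants. Emptiness of such automata on finite words is decidable by Demri and Lazi\'c~\cite{DemriL09}. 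If that formalization goes through, I can then compute $\bpsi_X$ by brute force for each realizable $X$. I would enumerate words of increasing length up to $D\cup P$-permutations, of which there are finitely many orbits per length, compute $A(\cdot)$ for each, and stop at the first match. Termination is guaranteed because $X$ is known to be realizable. Taking the maximum over the finitely many pairs $(\bpsi_0,X)$ yields $N_{A,p}$.
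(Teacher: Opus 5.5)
Your proposal is correct and is essentially the paper's proof: non-guessing confines every configuration reached after the length-$p$ prefix to $S\times([\bomega]\cup\{\theta_0\})$, your profile $A(\bpsi)$ is exactly the partition $(C,C^\prime)$ that defines $L_{C,C^\prime}(\bA)$, and replacing the suffix by a short one with the same profile transfers pumpability (you argue contrapositively, the paper argues directly). The only real difference is in computability, where the paper takes $N_{A,p}$ from the a priori small-witness bound for alternating non-guessing one-register automata (Genkin--Kaminski--Peterfreund) while you decide emptiness (Demri--Lazi\'c) and then search; one small fix is that prefix orbits should be taken under permutations fixing $\theta_0$ as well as $D$, since $P$ depends on $\theta_0$.
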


\begin{proof}[Proof of~\Cref{t: fma1 pll}]
By~\Cref{t: pl one fma} and~\Cref{r: fma1},
we have an upper bound $3n^2+n$ for $p_{\text{min}}$.
Then, we can search for the minimal pumping length
by checking if $\langof{\bA}$ is $(p,1)$-pumpable
for values $p$ in $[1,3n^2+n]$.

It remains to show that checking for specific value is decidable.
By~\Cref{l: aux} it suffices to verify that all words
of bounded length are pumpable.
Checking that a specific word $\bsigma$ is $(p,1)$-pumpable,
can be done by iterating all decompositions on the prefix
of length $p$ and deciding if
$\btau(\bupsilon)^\ast\bvarphi\subseteq \langof{\bA}\cap [\bsigma]^\ast$,
which is decidable, since both languages are in fact,
regular languages over the finite alphabet $[\bsigma]$.
However, over infinite alphabets,
there are infinitely many words, even of bounded lengths.
Nevertheless, it suffices to consider words that are equivalent
up to $D$-permutations, of which there are only finitely
many equivalence classes for each length.
\end{proof}

For a configuration $c$, let $L_c(\bA)$
be the language of all words accepted from configuration $c$.
For two finite sets of configurations, $C$ and $C^\prime$,
let
\begin{equation}
\label{eq: lcc}
L_{C,C^\prime}(\bA)=\bigcap_{c\in C}L_c(\bA)\cap
\bigcap_{c^\prime\in C^\prime}\overline{L_{c^\prime}(\bA)}\, ,
\end{equation}
that is all the words that are accepted from every configuration in $C$
and are not accepted from configurations in $C^\prime$.

For the proof of~\Cref{l: aux}, we use
the extended model of alternating non-guessing
one-register automata,
which is closed under Boolean operations and thus
can recognize languages of the form~\eqref{eq: lcc}.
It is known that for every one-register non-guessing alternating automaton,
one can compute a constant $N$ such that its language is nonempty
if and only if there is an accepted word of length
at most $N$~\cite[Lemmas 16 and 18]{GenkinKP14}.
Moreover, the constant depends only on the number of states and 
the number of constants in the description of the automaton.

\begin{proof}[Proof of~\Cref{l: aux}]
Let $N_{A,p}$ be the maximal constant for automata
that accept languages of the form~\eqref{eq: lcc}
with $|C|,|C^\prime|\leq n(p+1)$ where $n$
is the number of states
of $\bA$.

Assume that all words of length
at most $p+N_{A,p}$ are $(p,1)$-pumpable.
Let $\bsigma\in\langof{\bA}$ with $|\bsigma|>p+N_{A,p}$ and
let $\bsigma=\bomega\bpsi$ with $|\bomega|=p$.
Since $\bA$ is non-guessing,
every run of $\bA$ on $\bomega$ ends at a configuration
in $C_{\omega}=S\times ([\bomega]\cup\{\theta_0\})$.
Partition $C_\omega$ as
\[
C=\{c\in C_\omega:\bpsi\in L_c(\bA)\},
\qquad C^\prime=C_\omega\setminus C\, .
\]
Notice, $\bpsi\in L_{C,C^\prime}(\bA)$.
Moreover, since $\bsigma\in \langof{\bA}$,
there is an accepting run of $\bA$ on $\bsigma$,
let $c$ be the configuration after reading $\bomega$,
by definition $c\in C$.

Since $L_{C,C^\prime}(\bA)\neq\emptyset$,
there is a word $\bpsi^\prime\in L_{C,C^\prime}(\bA)$
with $|\bpsi^\prime|\leq N_{A,p}$
by the definition of $N_{A,p}$,
In particular, there is an accepting run of $\bA$ on
$\bomega\cdot \bpsi^\prime$
because $\bpsi^\prime\in L_c(\bA)$.
By our assumption, the word $\bomega\bpsi^\prime$ is $(p,1)$-pumpable.
Let the decomposition be $\btau\bupsilon\bvarphi\bpsi^\prime$
such that for all $k=0,1,2,\ldots$, 
$\btau\bupsilon^k\bvarphi\bpsi^\prime\in L(\bA)$.

We contend that this is also a decomposition
that works for $\bsigma$,
i.e., for all $k=0,1,2,\ldots$, 
$\btau\bupsilon^k\bvarphi\bpsi\in L(\bA)$.
Indeed, take an accepting run of $\bA$ on 
$\btau\bupsilon^k\bvarphi\bpsi^\prime\in L(\bA)$
it is of the form,
\[
c_0\transword{\btau\bupsilon^k\bvarphi}c_{\text{mid}}
\transword{\bpsi^\prime}c_f\, .
\]
Since $\bA$ is non-guessing, the register
at configuration $c_{\text{mid}}$
must be either the initial value $\theta_0$ or one of the symbols
seen in the run so far, thus
in $[\btau\bupsilon^k\bvarphi]\subseteq [\bomega]$.
Hence, $c_{\text{mid}}\in C_\omega$. 
Finally, since $\bpsi^\prime\in L_{c_{\text{mid}}}(\bA)$,
$\bpsi\in L_{c_{\text{mid}}}(\bA)$.
Therefore, there is an accepting run from
$c_{\text{mid}}$ over $\bpsi$ as well.
\end{proof}

\subsection{Extension to other models}
\label{s: hfma}

Lasota and Piorkowski~\cite{LasotaP21}
introduced a restriction of finite-memory automata
called \emph{hierarchical register automata}.
These automata 
update registers in a hierarchical manner: if
a transition rule changes the $i$-th register,
then all smaller registers must be preserved,
while larger registers are left unconstrained.
The formal definition is as follows.

\begin{definition}
A hierarchical $r$-register automaton 
is a finite-memory automaton where each transition
constraint $\varphi$ has the following form:
\[
\varphi=\psi(x_1,x_2,\ldots,x_i,y,x_i^\prime)
\wedge \bigwedge_{1\leq j<i}(x_j=x_j^\prime)\, ,
\]
for some $i\in\{1,2,\ldots,r\}$.
\end{definition}

Note that all smaller registers are preserved, and larger ones are
not mentioned in $\varphi$ 
(and hence their pre- and post-values are unspecified, which means that
any pre- and post-values are allowed).
The number $i$ is called
the level of the transition constraint.

Hierarchical register automata
recognize a proper subclass of all quasi-regular languages.
However, it is known that their languages
have rational commutative images~\cite[Theorem 14]{LasotaP21},
an extension of a result that was previously known
only for one-register automata~\cite[Theorem 6]{HofmanJLP21}.
In some sense, they are pseudo-one-register automata.
To strengthen this analogy, we show
that hierarchical register automata also
satisfy the classical pumping lemma.

\begin{theorem}
\label{t: pl hfma}
Let $\bA$ be a hierarchical finite-memory automaton with $n$ states and
$r$ registers.  Then $L(\bA)$ is $((11n+1)^rn,1)$-pumpable.
\end{theorem}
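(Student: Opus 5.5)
We argue by induction on $r$, following the proof of \Cref{t: pl one fma} and generalizing its dichotomy: either the register of interest is stable long enough for a smaller-level argument, or it is updated often enough that \Cref{p: comp tri} supplies a compatible pair. Write $P_r=(11n+1)^r n$, so $P_0=n$ and $P_r=(11n+1)P_{r-1}$. The base case $r=0$ is the classical pumping lemma. Indeed, in a run of length $n$ two of the first $n+1$ configurations share the state and (having no registers) coincide, so \Cref{l: same cfg} applies. The induction is cleaner if we strengthen the statement to a segment version. For every run segment of length at least $P_r$ inside an accepting run, there should be indices $i<j$ within the segment such that $\bsigma$ can be pumped with $\bupsilon=\sigma_{i+1}\cdots\sigma_j$. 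Applied to the first $P_r$ transitions, this gives the theorem.

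For the inductive step, look at register $r$, the highest one. A transition of level $i<r$ preserves registers $1,\ldots,i-1$, may change register $i$, and leaves registers $i+1,\ldots,r$ unconstrained. So register $r$ can change on any transition; the only transitions that constrain it are those of level $r$. We cut the segment of length $P_r$ into $11n+1$ consecutive blocks of length $P_{r-1}$.

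\emph{Case 1: some block contains no change of register $1$.} This is the wrong register to focus on, so we reverse the order: register $1$ is the most constrained one. It is preserved by every transition except level-$1$ transitions, and a level-$1$ transition constrains only $x_1,y,x_1'$ while leaving all other registers free. So we condition on register $1$.
\begin{itemize}
\item If register $1$ stays constant on a block of length $P_{r-1}$, then on that block the automaton acts as a hierarchical automaton with registers $2,\ldots,r$, with register $1$ frozen to a fixed value and treated as a constant. Equalities with this value become part of the state/constant information, and the state count stays $n$. Induction then gives a pumpable loop inside the block. Every configuration in the loop agrees on register $1$, so the pumped runs remain valid.
\item Otherwise every block contains a register-updating transition of register $1$, which necessarily has level $1$. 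This yields at least $11n+1$ such transitions. Their triples $(x_1,y,x_1')$ fall into types (a), (b), (c), and as in the proof of \Cref{t: pl one fma}, \Cref{p: comp tri} produces a compatible pair sharing the source state (type (a)) or the target state (types (b), (c)).
\end{itemize}

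\emph{Case 2: a compatible pair of level-$1$ transitions.} We adapt \Cref{l: diff registers}. Splice the run: after the prefix, use the transition at position $j$ starting from the configuration at position $i$, and symmetrically for the extended runs. Compatibility handles register $1$ exactly as in the one-register case. Registers $2,\ldots,r$ are unconstrained by level-$1$ transitions, so at the splice we may choose their post-values freely to match the continuation run. The main obstacle lies here. For type (a), the splice needs the source configurations at $i$ and $j$ to agree on the constraint $\psi(x_1,y,x_1')$, and only on $x_1$. Since $\psi$ mentions no higher registers, the transition is available. Repeated pumping $\bupsilon^k$ also works, because each junction is again a level-$1$ transition whose free higher registers can be reset to those of the original run. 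We must still check that a fresh choice of higher registers cannot clash with the register-$1$ value or with the distinctness requirement in $\Reg$. We would resolve this by using that the continuation values were already distinct from the same register-$1$ value in the original run: after the splice, register $1$ holds $\theta_{j+1}$ (respectively $\theta_{i+1}$), exactly as in the original continuation.

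The counting then matches $P_r=(11n+1)P_{r-1}$: either one of the $11n+1$ blocks is update-free and induction applies inside it, or we obtain $11n+1$ level-$1$ updates and hence a compatible pair. The remaining detail is to verify that freezing register $1$ in Case 1 produces a genuine hierarchical automaton with $r-1$ registers and at most $n$ states. We handle this by leaving the state set unchanged and substituting the fixed value for $x_1=x_1'$ in every constraint of level at least $2$.
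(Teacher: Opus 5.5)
Your argument is the paper's argument recast as an induction. The paper descends level by level inside the first $(11n+1)^rn$ transitions: either $11n+1$ level-$i$ updates give a compatible pair, or some stretch of length $(11n+1)^{r-i}n$ keeps registers $1,\ldots,i$ fixed. You instead freeze register $1$ into a new constant and recurse on $r-1$ registers. The counting $P_r=(11n+1)P_{r-1}$ and the compatible-pair splice are the same.

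\textbf{The reduction step fails as written.} In a block where register $1$ never changes, level-$1$ rules can still fire (with $x_1=x_1'$ holding in that instance), and each such step re-guesses all of registers $2,\ldots,r$. Your reduced automaton is built only from constraints of level at least $2$. So these steps are not transitions of it, the block is not a run of the reduced automaton, and the induction hypothesis cannot be applied. The repair is to add, for every level-$1$ rule $s\trans{\psi(x_1,y,x_1')}s'$, the rule $s\trans{\psi(\theta,y,\theta)}s'$. This is a legitimate level-$1$ rule of the $(r-1)$-register automaton whose constraint simply ignores the new first register. Such steps may change old register $2$, so in the next round they are counted among the reduced level-$1$ updates. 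Your Case~2 splice covers them, since their constraint mentions only $y$ and the constants. The paper's write-up glosses over the same phenomenon when it asserts that the updates in the final sub-run are ``necessarily of level $r$''; once repaired, your induction treats it uniformly.

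\textbf{The strengthened statement must be local.} The block is not part of an accepting run of the reduced automaton, so state the hypothesis as follows. For every run segment $c\transword{\bsigma}c'$ of an $r$-register hierarchical automaton with $n$ states and an arbitrary finite set of constants, with $|\bsigma|\geq P_r$, there is a decomposition $\bsigma=\btau\bupsilon\bvarphi$ with $\bupsilon\neq\varepsilon$, $|\btau\bupsilon|\leq P_r$, and $c\transword{\btau\bupsilon^k\bvarphi}c'$ for all $k\geq 0$. Both of your mechanisms (a repeated configuration and a compatible pair) keep the endpoint configurations fixed, so this is what you actually prove. It lifts through the freezing because, after the repair, run segments of the reduced automaton are exactly run segments of $\bA$ with register $1$ held at $\theta$.

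\textbf{Types (b) and (c) need a different splice.} Your description (fire the transition at position $j$ from $c_i$) is the type~(a) mechanism. For types~(b) and~(c) only the target states coincide. There you reuse the transition at position $i$ with its post-registers redirected to those of $c_{j+1}$, which shifts the loop to $\sigma_{i+2}\cdots\sigma_{j+1}$.
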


For the proof, we extend the notion of
compatibility of transitions for this model.

\begin{definition}
Two register-updating transitions of level $i$:
$\conf{s}{(\sigma_1,\sigma_2,\ldots,\sigma_r)}\trans{\sigma_{r+1}}
\conf{s^\prime}{(\sigma^\prime_1,\sigma^\prime_2,\ldots,\sigma^\prime_r)}$
and
$\conf{t}{(\tau_1,\tau_2,\ldots,\tau_r)}\trans{\tau_{r+1}}
\conf{t^\prime}{(\tau^\prime_1,\tau^\prime_2,\ldots,\tau^\prime_r)}$,
are called compatible,
if 
\begin{itemize}
\item $\sigma_j=\tau_j$ for all $j<i$; and
\item the restrictions
$\conf{s}{\sigma_i}\trans{\sigma_{r+1}}\conf{s^\prime}{\sigma^\prime_i}$
and
$\conf{t}{\tau_i}\trans{\tau_{r+1}}\conf{t^\prime}{\tau_i^\prime}$,
of the above transitions
to the $i$-th register,
are compatible by~\Cref{d: comp}.
\end{itemize}


\end{definition}

Similarly to the proof of~\Cref{l: diff registers},
one can show that if there are two
compatible transitions, then there is a straightforward pumping.
It remains to prove that every run either
has repeating configurations
or contains two compatible register-updating transitions.

\begin{proof}[Proof of~\Cref{t: pl hfma}]
Let $\bsigma\in\langof{\bA}$ with $|\bsigma|\geq (11n+1)^rn$,
take an accepting run of $\bA$ on $\bsigma$
and focus on the first $(11n+1)^rn$ transitions.

First, if there are $11n+1$ register-updating transitions of level $1$,
then there are two compatible transitions, because
there are no constraints on the other register values.

Assume there are at most $11n$ level $1$ register-updating
transitions.
This partition the run to $11n+1$ sub-runs
in which the first register does not change.
Note that
\[
\frac{(11n+1)^rn-11n}{11n+1}>(11n+1)^{r-1}n-1\, .
\]
Therefore, there is a sub-run having at
least $(11n+1)^{r-1}n$ transitions
in which the first register does not change.

Now, if there are ${11n+1}$ register-updating transitions
of level $2$ with register $1$ preserved between them,
then there is a pair of compatible transitions of level $2$.

If there are at most $11n$ level $2$ register-updating transitions,
then the current sub-run is partitioned into at most $11n+1$
sub-runs in which the first two registers are preserved.
One of these sub-runs has at least $(11n+1)^{r-2}n$ transitions.
Repeating this argument for levels $3,4,\ldots,r-1$,
we obtain a sub-run of length at least $(11n+1)n$
in which the first $r-1$ registers are preserved.

At the end, the final sub-run has at least $(11n+1)n=11n^2+n$ transitions
in which all the first ${r-1}$ registers are preserved.
Therefore,
either there are $n$ consecutive transitions in which all the registers
are preserved or there are $11n+1$ register-updating transitions
which are necessarily of level $r$.
In the former case, there is a repetition of configurations,
and in the latter case there is a compatible pair of transitions.
\end{proof}

\section{Conclusions}
We studied quantitative pumping lemmas
for finite-memory automata
over infinite alphabets.
For general finite-memory automata,
we improved the known bounds on the order
of the alphabet permutation
needed for pumping, and showed that order two always suffices
if the pumping length is sufficiently long.
For one-register automata,
and more generally for hierarchical register automata,
we proved that the classical pumping lemma holds without any
nontrivial permutation.
In the one-register case, the quadratic upper bound
on the pumping length is optimal
up to constant factors.

The undecidability results show that optimal
pumping constants cannot be computed in general.
A natural direction for further work is to identify
additional subclasses of finite-memory automata,
beyond the non-guessing one-register model,
for which optimal or near-optimal pumping constants
are decidable.
Another question is whether the order-two pumping length
for general finite-memory automata
can be reduced substantially from the current Ramsey-type bound.

\bibliographystyle{eptcs}
\bibliography{generic}

\end{document}